\documentclass[cmbright]{my_mmaauth}

\usepackage{amsmath}
\usepackage{moreverb}

\usepackage{caption}
\usepackage{subfig}

\usepackage[dvips,colorlinks,bookmarksopen,bookmarksnumbered,citecolor=red,urlcolor=red]{hyperref}

\usepackage{cite}


\newtheorem{definition}{Definition}[section]
\newtheorem{lemma}{Lemma}[section]
\newtheorem{proposition}{Proposition}[section]
\newtheorem{theorem}{Theorem}[section]


\newenvironment{proof}[1][Proof]{\begin{trivlist}
\item[\hskip \labelsep {\bfseries #1}]}{\end{trivlist}}

\begin{document}

\runninghead{A. Rachah, D. F. M. Torres}

\title{Predicting and controlling the Ebola infection}

\author{Amira Rachah\affil{$\!\,^{\dag}$}
and Delfim F. M. Torres\affil{$\!\,^{\ddag}$}\corrauth}

\address{\affilnum{$^{\dag}$}
Universit\'e Paul Sabatier, Institut de Math\'ematiques,
31062 Toulouse Cedex 9, France.
\affilnum{$^{\ddag}$}Center for Research and Development
in Mathematics and Applications (CIDMA), Department of Mathematics,
University of Aveiro, 3810-193 Aveiro, Portugal.}

\corraddr{Department of Mathematics,
University of Aveiro,
3810-193 Aveiro, Portugal.
Email: {\tt delfim@ua.pt}}


\begin{abstract}
We present a comparison between two different mathematical
models used in the description of the Ebola virus propagation
currently occurring in West Africa. In order to improve the
prediction and the control of the propagation of the virus,
numerical simulations and optimal control of the two
models for Ebola are investigated. In particular, we study
when the two models generate similar results.
\end{abstract}


\MOS{92D30; 93A30}

\keywords{Mathematical modelling; Epidemiology; Ebola; Optimal control analysis.}

\maketitle

\vspace{-6pt}


\section{Introduction}

Ebola virus is a severe infection affecting currently several African countries,
mainly Guinea, Sierra Leone, and Liberia. It was first discovered in 1976 in the
Democratic Republic of Congo, near the Ebola river, where the virus takes its
name, but recently was also identified in West Africa \cite{barraya,joseph}.
The latest outbreak of the virus was not only the largest, it is also the
deadliest infectious disease outbreak ever known.
According to the World Health Organisation (WHO), ``it is thought that fruit
bats of the Pteropodidae family are natural Ebola virus hosts''.
It then spreads through human-to-human transmission,
becoming the deadliest pathogen. The early signs and
symptoms of the virus include a sudden onset of fever, intense weakness
and headache. Over time, symptoms become increasingly severe and include
diarrhea, raised rash, internal and external bleeding, from nose, mouth,
eyes and anus. As the virus spreads through the body, it damages the immune
system and organs \cite{legrand,anon1,peter,tara1,alton}. Ebola virus
is transmitted to an initial human by contact with an infected animal's
body fluid. On the other hand, human-to-human transmission can take place
with direct contact (through broken skin or mucous membranes in, for example,
the eyes, nose, or mouth) with blood or body fluids of a person who is sick with
or has died from Ebola. It is also transmitted indirectly via exposure
to objects or environment contaminated with infected secretions
\cite{borio,dowel,edward,tara2,anon2,okwar}. There is yet no licensed
treatment proven to neutralise the virus, but a range of blood, immunological
and drug therapies are under development.

Mathematical models are a powerful tool for investigating human infectious
diseases, such as Ebola virus, contributing to the understanding of the dynamics
of disease and providing useful predictions about the potential transmission of
a disease and the effectiveness of possible control measures, which can provide
valuable information for public health policy makers \cite{diekman,delf,MR2719552}.
Cases of success include influenza and small pox, for which epidemic models have
provided the foundation for the best vaccination practices \cite{longini,kretz}.
Epidemic models date back to the early twentieth century, to a set of three
articles from 1927, 1932, and 1933 by Kermack and McKendrick, whose models were
used for modelling the plague and cholera epidemics
\cite{Kermack:McKendrick:1,Kermack:McKendrick:2,Kermack:McKendrick:3}.
The most commonly implemented models in epidemiology are the SIR and SEIR models.
The SIR model consists of three compartments: Susceptible individuals $S$,
Infectious individuals $I$, and Recovered individuals $R$. In many infectious
diseases there is an exposed period after the transmission of the infection
from susceptible to potentially infective members but before these potential
infective can transmit infection. Then an extra compartment is introduced,
the so called exposed class $E$, and we use compartments $S$, $E$, $I$ and $R$
to give a generalization of the basic SIR model \cite{herbert,brauer}.

When analyzing a new outbreak, the researchers usually start with the SIR
and SEIR models to fit the available outbreak data, obtaining estimates
for the parameters of the model \cite{brauer}. In the modelling of the spreading
mechanism of the Ebola virus currently affecting several African countries,
Rachah and Torres used the SIR \cite{MyID:321} and the SEIR \cite{symcomp}
basic models. In their work, they used parameters identified from recent data
of the WHO to describe the behavior of the virus.
After modelling and simulation steps, they applied optimal control techniques
in order to understand how the spread of the virus may be controlled, e.g.,
through education campaigns, immunization or isolation
\cite{gaff,vacc_opt2,vacc_opt4,vacc_opt3,vacc_opt5,vacc_opt1}.

Here we discuss the differences between the SIR and SEIR models,
previously presented by Rachah and Torres in their study of the
description of the behavior of the Ebola virus \cite{MyID:321,symcomp}.
For a non-integer order (fractional) SEIR model see \cite{Area}.
Our comparison is based on the numerical resolution of the models.
In order to study which model can better control the propagation of the virus
into population, we discuss the comparison between the SIR and SEIR models
by studying the optimal control of the Ebola virus. The manuscript is organized
as follows. In Section~\ref{Sec:2} we briefly recall the SIR and SEIR models
for Ebola infection. Their numerical simulation is carried out in
Section~\ref{Sec:3}, where we emphasise the differences between them.
The best strategies for the control of the propagation of Ebola into
the populations, accordingly to both models, are investigated
in Section~\ref{Sec:4}. We end with Section~\ref{Sec:conc} of conclusion.


\section{The SIR and SEIR models for Ebola infection}
\label{Sec:2}

SIR and SEIR models provide the foundations of mathematical
modelling in epidemiology, being widely used in practice,
to describe real world problems in epidemiology.
In this section, we briefly describe the two mathematical models.

In the modelling of Ebola virus, Rachah and Torres \cite{MyID:321}
used the basic SIR (Susceptible--Infectious--Recovery) model.
In the description of the transmission of Ebola virus by the SIR
model, the population is divided into three groups:
\begin{itemize}
\item the Susceptible compartment $S(t)$ denotes individuals who are susceptible
to catch the virus, and so might become infectious if exposed;

\item the Infectious compartment $I(t)$ denotes infectious individuals who are
suffering the symptoms of Ebola and are able to spread  the virus through
contact with the susceptible class of individuals;

\item the ``Recovered'' compartment $R(t)$ denotes individuals that do not
affect the transmission dynamics in any way.
\end{itemize}
The model is described by the following system of nonlinear
ordinary differential equations (ODEs):
\begin{equation}
\label{SIRmodel}
\begin{cases}
\dfrac{dS(t)}{dt} = -\beta S(t)I(t),\\[0.3cm]
\dfrac{dI(t)}{dt} =  \beta S(t)I(t) - \mu I(t),\\[0.3cm]
\dfrac{dR(t)}{dt} =  \mu I(t),
\end{cases}
\end{equation}
where $\beta >0$ is the infection rate and $\mu >0$ is the recovered rate.
The  initial conditions are given by
 \begin{equation}
\label{eq4:SIR}
S(0)=S_0>0,
\quad I(0)=I_0>0,
\quad R(0)=0.
\end{equation}

The SEIR model is an extension of the basic SIR model, where an extra compartment
is introduced, the so called exposed compartment $\hat{E}(t)$. The Exposed
compartment $\hat{E}(t)$ denotes the individuals who are infected but the
symptoms of the virus are not yet visible. The transmission of the virus
is then described by the following set of nonlinear ODEs:
\begin{equation}
\label{eq1:SEIR}
\begin{cases}
\dfrac{d\hat{S}(t)}{dt} = -\hat{\beta} \hat{S}(t)\hat{I}(t),\\[0.3cm]
\dfrac{d\hat{E}(t)}{dt}
= \hat{\beta} \hat{S}(t)\hat{I}(t) - \hat{\gamma} \hat{E}(t),\\[0.3cm]
\dfrac{d\hat{I}(t)}{dt} = \hat{\gamma} \hat{E}(t) - \hat{\mu} \hat{I}(t),\\[0.3cm]
\dfrac{d\hat{R}(t)}{dt} = \hat{\mu} \hat{I}(t),
\end{cases}
\end{equation}
where $\hat{\beta} >0$ is the contact rate (transmission rate),
$\hat{\gamma}$ is the infection rate, and $\hat{\mu} >0$ is the recovered rate.
The initial conditions are given by
 \begin{equation}
\label{eq5:SEIR}
\hat{S}(0)=\hat{S}_0>0,
\quad \hat{E}(0)=\hat{E}_0 \geq 0,
\quad \hat{I}(0)=\hat{I}_0>0,
\quad \hat{R}(0)=0.
\end{equation}
The particularity of the SEIR model is in the exposed compartment,
which is characterized by infected individuals that cannot communicate the virus.
These individuals are in the so called latent period \cite{brauer}. For Ebola
virus, such stage makes all sense since it takes a certain time for an infected
individual to become infectious. During this period of time such individuals
are in the exposed compartment $\hat{E}(t)$, before they become infectious.
The infectious compartment at time $t$, denoted by $\hat{I}(t)$,
represents the individuals that are infected by the virus and are able to spread
it through contact with the susceptible class of individuals.
Finally, we have the ``recovered'' compartment $\hat{R}(t)$, which denotes
the individuals that do not affect the transmission dynamics in any way.
Figure~\ref{fig1:SEIR} shows the diagrammatic representation of the
virus progress in an individual.
\begin{figure}
\centering
\includegraphics[scale=0.5]{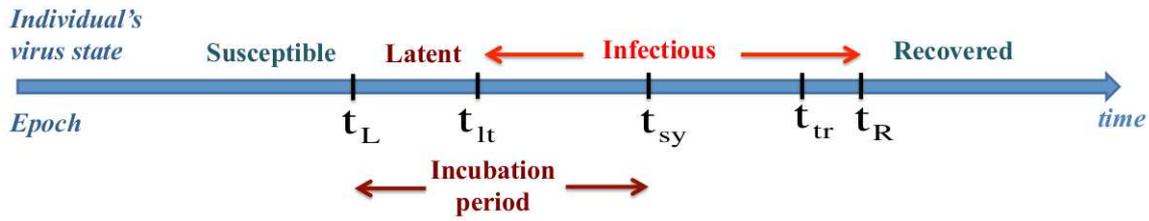}
\caption{Ebola virus progress in an individual
by using the SEIR model, where
$t_L$ is the time when infectious occurs,
$t_{lt}$ denotes latency to infectious transition,
$t_{sy}$ the time when symptoms appear,
$t_{tr}$ the time of first transmission to another susceptible,
and $t_R$ the time where the individual is no longer infectious.
\label{fig1:SEIR}}
\end{figure}


\subsection{Analysis of the SIR model}

In this subsection we study the properties of
the SIR model \eqref{SIRmodel}--\eqref{eq4:SIR}.
We begin with a trivial observation.

\begin{proposition}
\label{prop:N:SIR:const}
The total population $N$ being modelled by the SIR model \eqref{SIRmodel}
is constant in time, that is, $S(t) + I(t) + R(t) =  N$ for all $t \geq 0$.
\end{proposition}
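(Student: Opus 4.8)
The plan is to show that the quantity $N(t) = S(t) + I(t) + R(t)$ has zero derivative for all $t \geq 0$, so that it remains equal to its initial value $N$. This is the standard conservation argument for closed compartmental models, and the system \eqref{SIRmodel} is set up precisely so that the transfer terms cancel.

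First I would differentiate $N(t)$ with respect to time and substitute the three right-hand sides from \eqref{SIRmodel}:
\begin{equation*}
\frac{dN(t)}{dt} = \frac{dS(t)}{dt} + \frac{dI(t)}{dt} + \frac{dR(t)}{dt}
= -\beta S(t) I(t) + \bigl(\beta S(t) I(t) - \mu I(t)\bigr) + \mu I(t).
\end{equation*}
The key observation is that the terms cancel pairwise: the $-\beta S I$ from the susceptible equation cancels the $+\beta S I$ in the infectious equation, and the $-\mu I$ in the infectious equation cancels the $+\mu I$ in the recovered equation. Hence $\frac{dN(t)}{dt} = 0$ for all $t \geq 0$.

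Since the derivative vanishes identically, $N(t)$ is constant on $[0,\infty)$, so $N(t) = N(0) = S_0 + I_0 + 0 = S_0 + I_0$ by the initial conditions \eqref{eq4:SIR}. Writing $N$ for this common value gives $S(t) + I(t) + R(t) = N$ for all $t \geq 0$, as claimed. There is no real obstacle here: the only point requiring any care is that the cancellation is exact because the model has no birth or death terms beyond the internal transfers between compartments, which is exactly the structure of \eqref{SIRmodel}; the conclusion then follows immediately from the fundamental fact that a function with identically zero derivative on an interval is constant there.
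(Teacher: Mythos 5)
Your proof is correct and follows the same approach as the paper's: summing the right-hand sides of \eqref{SIRmodel} to get $S'(t)+I'(t)+R'(t)=0$ and concluding that the total population is constant. You simply spell out the cancellation and the identification $N=S_0+I_0$ in more detail than the paper's one-line argument.
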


\begin{proof}
Follows by adding the right-hand side of equations
\eqref{SIRmodel}: $S'(t)+I'(t) + R'(t) =0$.
\end{proof}

The following lemmas provide essential properties
of the SIR model \eqref{SIRmodel}--\eqref{eq4:SIR}.

\begin{lemma}
\label{lemma1:SIR}
Let $(S(t),I(t),R(t))$ be a solution of the SIR model
\eqref{SIRmodel}--\eqref{eq4:SIR}. Then $S(t) \geq 0$
and $I(t) \geq 0$ for all $t \geq 0$.
\end{lemma}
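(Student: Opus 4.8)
The plan is to prove nonnegativity of $S$ and $I$ separately, exploiting the special structure of the right-hand sides of \eqref{SIRmodel}. The key observation is that both the $S$- and $I$-equations are linear in the variable being differentiated, so each can be integrated in closed form as an exponential of an integrating factor, which is manifestly nonnegative.

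First I would treat $S(t)$. From the first equation of \eqref{SIRmodel} we have $S'(t) = -\beta I(t) S(t)$, which we may rewrite as $\tfrac{d}{dt}\ln S(t) = -\beta I(t)$ wherever $S(t)>0$. Integrating from $0$ to $t$ gives
\begin{equation*}
S(t) = S_0 \exp\!\left( -\beta \int_0^t I(\tau)\, d\tau \right).
\end{equation*}
Since $S_0 > 0$ by \eqref{eq4:SIR} and the exponential is always strictly positive, this shows $S(t) > 0$ for all $t \geq 0$, which in particular gives $S(t) \geq 0$.

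Next I would handle $I(t)$ in the same spirit. The second equation, $I'(t) = \beta S(t) I(t) - \mu I(t) = \bigl(\beta S(t) - \mu\bigr) I(t)$, is again linear in $I$, so separating variables and integrating yields
\begin{equation*}
I(t) = I_0 \exp\!\left( \int_0^t \bigl(\beta S(\tau) - \mu\bigr)\, d\tau \right).
\end{equation*}
With $I_0 > 0$ from \eqref{eq4:SIR} and the exponential factor positive, we conclude $I(t) > 0$, hence $I(t) \geq 0$, for all $t \geq 0$.

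The main subtlety — and the step deserving care — is the justification of dividing by $S$ (respectively $I$) to pass to the logarithmic form, since a priori one does not know the solution stays away from zero. The clean way around this is to avoid division altogether: define $\Phi(t) = \exp\!\bigl(\beta \int_0^t I(\tau)\,d\tau\bigr)$ as an integrating factor and verify directly that $\tfrac{d}{dt}\bigl(\Phi(t) S(t)\bigr) = \Phi(t)\bigl(S'(t) + \beta I(t) S(t)\bigr) = 0$, so $\Phi(t) S(t) = \Phi(0) S(0) = S_0$, giving $S(t) = S_0/\Phi(t) > 0$ with no division by the unknown; the argument for $I$ is identical using the integrating factor $\exp\!\bigl(\int_0^t(\mu - \beta S(\tau))\,d\tau\bigr)$. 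This presupposes existence and continuity of a solution on $[0,\infty)$, which we take for granted from standard ODE theory given the smoothness of the right-hand sides.
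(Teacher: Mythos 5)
Your proof is correct, and it takes a genuinely different (and cleaner) route than the paper, although both rest on the same underlying fact: the $S$- and $I$-equations are linear in the unknown being differentiated, so their solutions admit an exponential representation. The paper argues by contradiction: assuming $S(t')<0$, it uses the intermediate value theorem to produce a zero $S(t_1)=0$, then applies the exponential representation starting at $t_1$ to conclude $S(t)=S(t_1)e^{f}=0$ for all $t\geq t_1$, contradicting $S(t')<0$; the same is asserted for $I$. You instead apply the representation directly from $t=0$: since $S_0>0$ and $I_0>0$, the formulas $S(t)=S_0\exp\bigl(-\beta\int_0^t I(\tau)\,d\tau\bigr)$ and $I(t)=I_0\exp\bigl(\int_0^t(\beta S(\tau)-\mu)\,d\tau\bigr)$ give positivity for all $t\geq 0$ in one step, with no case analysis. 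Your version is slightly stronger (it yields strict positivity, not just nonnegativity), and your integrating-factor justification --- verifying $\frac{d}{dt}\bigl(\Phi(t)S(t)\bigr)=0$ rather than dividing by the unknown --- makes explicit a point the paper glosses over: the paper's identity $S(t)=S(t_1)e^{f}$ rests on exactly the same uniqueness argument for a linear ODE, but is invoked at a point where $S(t_1)=0$, so a naive separation-of-variables derivation there would divide by zero. Both your argument and the paper's presuppose existence and continuity of the solution on $[0,\infty)$, which you acknowledge explicitly.
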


\begin{proof}
The proof is done by contradiction. We have that $S(t)$ and $I(t)$ are
continuous functions, $S(0)=S_0>0$ and $I(0)=I_0>0$.
Suppose that $S(t')<0$. Then, by the intermediate value theorem,
there exist $t_1 \in [0,t']$ such that $S(t_1)=0$. By using the first equation
of \eqref{SIRmodel}, we obtain that $S(t)=S(t_1)e^{f}=0$ for $t \geq t_1$,
where $f$ is the primitive of $-\beta I(t)$. Thus $S(t)=0$ for $t \geq t_1$,
which is a contradiction. The same arguments can be applied to $I(t)$.
\end{proof}

\begin{lemma}
\label{lemma2:SIR}
Let $(S(t),I(t),R(t))$ be a solution of the SIR model
\eqref{SIRmodel}--\eqref{eq4:SIR}. Then $S(t)+I(t) \leq N$
and $0 \leq R(t) \leq N$ for all $t\geq 0$.
\end{lemma}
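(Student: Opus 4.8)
The plan is to prove the two bounds separately, drawing on the three results already established: Proposition~\ref{prop:N:SIR:const} (which gives $S(t)+I(t)+R(t)=N$), Lemma~\ref{lemma1:SIR} (which gives $S(t)\geq 0$ and $I(t)\geq 0$), and the third ODE in \eqref{SIRmodel}, namely $R'(t)=\mu I(t)$.

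First I would establish the lower bounds. Since $R(0)=0$ and $R'(t)=\mu I(t)$ with $\mu>0$ and $I(t)\geq 0$ by Lemma~\ref{lemma1:SIR}, the function $R$ is nondecreasing, so $R(t)\geq R(0)=0$ for all $t\geq 0$. Combined with $S(t)\geq 0$ and $I(t)\geq 0$, this gives the nonnegativity of all three components for free.

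Next I would establish the inequality $S(t)+I(t)\leq N$. By Proposition~\ref{prop:N:SIR:const} we have $S(t)+I(t)=N-R(t)$, and since $R(t)\geq 0$ was just shown, it follows immediately that $S(t)+I(t)\leq N$. The same identity handles the upper bound on $R$: from $S(t)+I(t)+R(t)=N$ together with $S(t)\geq 0$ and $I(t)\geq 0$, we get $R(t)=N-S(t)-I(t)\leq N$, which together with $R(t)\geq 0$ yields $0\leq R(t)\leq N$.

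I do not anticipate any genuine obstacle here: the statement is a direct corollary of the conservation law and the sign information already in hand, so the entire argument reduces to rearranging the identity $S+I+R=N$ and reading off signs. The only point requiring the slightest care is making the monotonicity argument for $R$ explicit, ensuring $R$ stays nonnegative, since that nonnegativity is what is needed to bound $S+I$ from above; everything else is a one-line consequence of the earlier results.
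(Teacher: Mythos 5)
Your proof is correct and is essentially the paper's own argument run in mirror order: the paper first shows $S(t)+I(t)\leq N$ by observing $\frac{d}{dt}\left[S(t)+I(t)\right]=-\mu I(t)\leq 0$ (via Lemma~\ref{lemma1:SIR}) and then deduces $R(t)\geq 0$ from the conservation law, whereas you first show $R(t)\geq 0$ by observing $R'(t)=\mu I(t)\geq 0$ and then deduce $S(t)+I(t)\leq N$ from that same law. Since these two monotonicity facts are equivalent under $S(t)+I(t)+R(t)=N$, the ingredients — Lemma~\ref{lemma1:SIR}, Proposition~\ref{prop:N:SIR:const}, and one sign observation on an equation of \eqref{SIRmodel} — are identical, and both versions are complete.
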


\begin{proof}
The sum of the first two equations of \eqref{SIRmodel} give us
$\dfrac{d}{dt} \left[ S(t) + I(t) \right] = -\mu I(t) \leq 0$.
Now, by using Lemma~\ref{lemma1:SIR} and $S(0)+I(0)=N$, we obtain
that $S(t)+I(t) \leq N$.
From Proposition~\ref{prop:N:SIR:const} we have $N-S(t)-I(t)=R(t)$.
We conclude that $R(t) \geq 0$. From Lemma~\ref{lemma1:SIR},
we have $S(t) + I(t) \geq 0$, which implies that $N - S(t) - I(t) \leq N$.
Therefore, $R(t) = N - S(t) - I(t) \leq N$.
\end{proof}

In what follows we use $p$ to denote
the relative removal rate parameter.

\begin{definition}
Given a SIR model \eqref{SIRmodel}, we introduce the quantity
$p:= \dfrac{\mu}{\beta}$,
called the relative removal rate parameter.
\end{definition}

\begin{lemma}
\label{lemm3:SIR}
Let $(S(t),I(t),R(t))$ be a solution of the SIR model
\eqref{SIRmodel}--\eqref{eq4:SIR}. Then,
$I(S(t)) =  I_0  +  S_0 - S(t) + p\ln\left(\dfrac{S(t)}{S_0}\right)$.
\end{lemma}

\begin{proof}
From \eqref{SIRmodel}--\eqref{eq4:SIR} one has
\begin{equation}
\label{eq9:SIR}
\frac{dI}{dS} = \dfrac{\beta S(t) - \mu}{-\beta S(t)}
= \dfrac{\mu}{\beta S(t)} - 1,
\end{equation}
By integrating
$dI = \left[ \dfrac{\mu}{\beta S(t)}   - 1 \right]dS$,
we obtain that
$$
\int_{0}^{t}dI = \int_{0}^{t} \left[ \dfrac{\mu}{\beta S(t)}   - 1 \right]dS,
$$
so that
$I(t)-I_0=\dfrac{\mu}{\beta }\ln(S(t))-\dfrac{\mu}{\beta }\ln(S_0)-S(t)-S_0$.
Therefore,
$I(S(t)) =  I_0  +  S_0 - S(t) + p\ln\left(\dfrac{S(t)}{S_0}\right)$,
where $p=\dfrac{\mu}{\beta }$.
\end{proof}

From Lemma~\ref{lemm3:SIR}, we know that the quantity
$\dfrac{\mu}{\beta} S - 1 = p S -1$ is positive
if $S < p$ and negative if $S > p$. Hence, $I(S)$
is an increasing function of $S$ for $S < p$ and
a decreasing function of $S$ for $S > p$. From \eqref{eq9:SIR},
we see  that $I(0)=-\infty$ and $I(S_0)=I(0)>0$. Therefore,
there exists a unique point $S_\infty$, $0 < S_\infty < S_0$,
such that $I(S_\infty) = 0$  and  $I(S) > 0$ for $  S_\infty < S \leq S_0$.
The point $(S_\infty,0)$ is an equilibrium point of the first two equations
of \eqref{SIRmodel} since both $\dfrac{dS}{dt}$ and $\dfrac{dI}{dt}$
vanish when $I=0$.

\begin{lemma}
\label{lemma4:SIR}
Let $(S(t),I(t),R(t))$ be a solution of the SIR model
\eqref{SIRmodel}--\eqref{eq4:SIR} in
$T = \left\{ (S,I)  :  S \geq 0,  I \geq 0,   S+I \leq N \right\}$.
Then $0 < S(t) \leq S_0$ and
\begin{equation}
\label{eq6:SIR}
S(t)=S_0 e^{\dfrac{\beta\left(R(t)-R(0)\right)}{\mu}}
\geq S_0 e^{-\frac{\beta N}{\mu}}
\quad  \text{for all} \quad t > 0.
\end{equation}
\end{lemma}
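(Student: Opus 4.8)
The plan is to split the statement into three parts: the strict positivity $S(t) > 0$, the upper bound $S(t) \leq S_0$, and the explicit exponential representation with its lower bound. For the first two, I would integrate the first equation of \eqref{SIRmodel} directly. Separation of variables gives
\[
S(t) = S_0 \exp\left(-\beta \int_0^t I(\tau)\, d\tau\right),
\]
and since $I$ is continuous the integral is finite on bounded intervals, so the exponential factor never vanishes and $S(t) > 0$ for all $t$. The same equation shows $\tfrac{dS}{dt} = -\beta S(t) I(t) \leq 0$ on $T$ (where $S \geq 0$ and $I \geq 0$), so $S$ is non-increasing and therefore $S(t) \leq S(0) = S_0$.

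The explicit formula is the core of the lemma, and I would obtain it by coupling the $S$- and $R$-equations rather than solving them in isolation. Since $S(t) > 0$, I can write $\tfrac{d}{dt}\ln S(t) = -\beta I(t) = -\tfrac{\beta}{\mu}\,\tfrac{dR}{dt}$; integrating from $0$ to $t$ yields $\ln\!\big(S(t)/S_0\big) = -\tfrac{\beta}{\mu}\big(R(t)-R(0)\big)$, equivalently
\[
S(t) = S_0 \exp\left(-\frac{\beta\big(R(t)-R(0)\big)}{\mu}\right).
\]
The lower bound then follows immediately from Lemma~\ref{lemma2:SIR}: with $R(0)=0$ and $0 \leq R(t) \leq N$, the exponent lies in the interval $\big[-\tfrac{\beta N}{\mu}, 0\big]$, so monotonicity of the exponential gives $S(t) \geq S_0 e^{-\beta N/\mu}$.

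The step I expect to demand the most care is fixing the sign of the exponent. A positive exponent cannot be right: since $R$ is non-decreasing ($\tfrac{dR}{dt} = \mu I \geq 0$), it would force $S(t) \geq S_0$, contradicting the monotonicity established above, and it would also make the claimed lower bound $S_0 e^{-\beta N/\mu}$ vacuous. The negative sign obtained above is exactly what makes that bound meaningful. A secondary subtlety is that one is tempted to form $\tfrac{dS}{dR}$ by dividing by $\tfrac{dR}{dt}$, which is illegitimate at instants where $I(t)=0$; I would sidestep this by working throughout with $\tfrac{d}{dt}\ln S(t) = -\tfrac{\beta}{\mu}\,\tfrac{dR}{dt}$, which holds for all $t$ because $S(t) > 0$, and integrating that identity directly.
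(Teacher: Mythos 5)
Your proof is correct, and its core is the same as the paper's: couple the $S$- and $R$-equations, integrate to obtain $S(t)=S_0\,e^{-\beta\left(R(t)-R(0)\right)/\mu}$, and then apply the bound $0\leq R(t)\leq N$ from Lemma~\ref{lemma2:SIR}. Your version, however, is more careful on two points where the paper's argument has small gaps. First, the paper obtains the key identity by dividing the first equation of \eqref{SIRmodel} by the third to form $\frac{dS}{dR}=-\frac{\beta S}{\mu}$, which is only legitimate at instants where $\frac{dR}{dt}=\mu I(t)\neq 0$; your identity $\frac{d}{dt}\ln S(t)=-\frac{\beta}{\mu}\,\frac{dR}{dt}$, valid for all $t$, sidesteps exactly this issue, as you intended. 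Second, the paper writes $\int\frac{dS}{S}$ without first knowing $S\neq 0$, whereas you establish $S(t)>0$ beforehand from $S(t)=S_0\exp\left(-\beta\int_0^t I(\tau)\,d\tau\right)$, so the logarithm is justified. Finally, your remark about the sign of the exponent is exactly right and in fact exposes a typo in the paper: the displayed formula \eqref{eq6:SIR} in the statement carries a positive exponent, yet the paper's own proof (like yours) derives $S(t)=S_0\,e^{-\beta\left(R(t)-R_0\right)/\mu}$, which is the only version consistent with $S$ being non-increasing and with the lower bound $S_0 e^{-\beta N/\mu}$ being non-vacuous.
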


\begin{proof}
By dividing the first and third equations of \eqref{SIRmodel},
we obtain that $\dfrac{dS}{dR} = -\dfrac{\beta S}{\mu}$. Thus
$\displaystyle \int \frac{dS}{S} = \int -\frac{\beta}{\mu} dR$.
By \eqref{eq4:SIR}, we know that $\ln\left(\frac{S(t)}{S(0)}\right)
= -\frac{\beta}{\mu}R(t) + \frac{\beta}{\mu}R(0)$,
so that $\ln\left(S(t)\right) = \ln\left(S(0)\right)
+ \frac{\beta\left(R_0 - R(t)\right)}{\mu}$.
We conclude that $S(t) =  S_0e^{-\frac{\beta\left(R(t) - R_0\right)}{\mu}}$.
From Lemma~\ref{lemma2:SIR}, $0\leq R(t) \leq N$ and we obtain that
$S_0 e^{-\frac{\beta N}{\mu}} \leq S_0
e^{-\frac{\beta\left(R(t) - R_0\right)}{\mu}} \leq S_0$.
Because $S_0>0$, we conclude that $0 < S(t) \leq S_0$ for all $t \geq 0$.
\end{proof}

The first conclusion we get from Lemma~\ref{lemma4:SIR}
is that $S$ is always a positive value,
hence there always remains some susceptible who are never infected.
We can also compute the number of susceptible at any time $t$
by using \eqref{eq6:SIR}. The second conclusion is:
if a small group of infectious is inserted into a group of susceptible
$S_0$ and $S_0 < p$, then the virus will die out rapidly. On the other hand,
if $S_0 > p$, then $I(t)$ increases as $S(t)$ decreases to $p$,
where it achieves its maximum value.

\begin{lemma}
\label{lemm5:SIR}
Let $(S(t),I(t),R(t))$ be a solution of the SIR model
\eqref{SIRmodel}--\eqref{eq4:SIR}. Then $S(t) \to S_\infty$
as $t \to \infty$ and  $R(t) \to  R_\infty$ as $t \to \infty$,
where $S_\infty$ and $R_\infty$ are finite numbers.
\end{lemma}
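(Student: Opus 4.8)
The plan is to establish the convergence of $S(t)$ and $R(t)$ by exploiting monotonicity together with boundedness, so that the Monotone Convergence Theorem yields finite limits. The key observation is that the SIR dynamics force both of these quantities to be monotone functions of $t$, and Lemma~\ref{lemma2:SIR} already supplies the bounds needed to pin the limits down.

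**First I would** argue that $R(t)$ is nondecreasing: from the third equation of \eqref{SIRmodel} we have $R'(t) = \mu I(t) \geq 0$, using $I(t) \geq 0$ from Lemma~\ref{lemma1:SIR} and $\mu > 0$. Hence $R$ is monotone nondecreasing, and by Lemma~\ref{lemma2:SIR} it is bounded above by $N$. A bounded monotone function converges, so there exists a finite $R_\infty := \lim_{t\to\infty} R(t)$ with $0 \leq R_\infty \leq N$. **Next I would** handle $S(t)$ analogously: the first equation gives $S'(t) = -\beta S(t) I(t) \leq 0$ since $\beta > 0$ and $S(t), I(t) \geq 0$ by Lemma~\ref{lemma1:SIR}, so $S$ is nonincreasing; combined with the lower bound $0 < S(t)$ from Lemma~\ref{lemma4:SIR}, the same monotone-bounded argument yields a finite limit $S_\infty := \lim_{t\to\infty} S(t)$ with $S_\infty \geq 0$.

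**Alternatively**, one can obtain the $S$-limit directly from the explicit relation \eqref{eq6:SIR} of Lemma~\ref{lemma4:SIR}, namely $S(t) = S_0\, e^{-\beta(R(t)-R_0)/\mu}$: since $R(t) \to R_\infty$ and the exponential is continuous, we immediately get $S(t) \to S_0\, e^{-\beta(R_\infty - R_0)/\mu} =: S_\infty$, a finite positive number. This route has the advantage of tying $S_\infty$ explicitly to $R_\infty$ and confirming $S_\infty > 0$, consistent with the earlier remark that some susceptibles are never infected.

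**The main obstacle**, such as it is, lies not in the convergence itself—which is a routine monotone-boundedness argument—but in verifying that the limiting value $S_\infty$ coincides with the root of $I(S) = 0$ identified in the discussion preceding Lemma~\ref{lemma4:SIR}. To close that gap I would note that $S'(t) \to 0$ is \emph{not} automatic from convergence of $S$, so I would instead observe that $R(t) \to R_\infty$ finite forces $\int_0^\infty I(t)\,dt < \infty$ (since $R_\infty - R_0 = \mu\int_0^\infty I\,dt$), and combined with the boundedness of $I'$ this yields $I(t) \to 0$; feeding $I \to 0$ back into the conserved relation of Lemma~\ref{lemm3:SIR} then identifies $S_\infty$ as precisely the point where $I(S_\infty) = 0$. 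For the bare statement asked here, however, only finiteness of the two limits is required, and that follows at once from monotonicity and the bounds of Lemma~\ref{lemma2:SIR}.
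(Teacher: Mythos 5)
Your proposal is correct and follows essentially the same route as the paper: $S$ is nonincreasing (since $S' = -\beta S I \leq 0$ by Lemma~\ref{lemma1:SIR}) and bounded below, $R$ is nondecreasing (since $R' = \mu I \geq 0$) and bounded above by Lemma~\ref{lemma2:SIR}, so both converge to finite limits by monotone boundedness. The alternative derivation of $S_\infty$ via the exponential formula of Lemma~\ref{lemma4:SIR} and the closing discussion of identifying $S_\infty$ with the root of $I(S)=0$ are sound extras, but the core argument coincides with the paper's proof.
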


\begin{proof}
From the first equation of \eqref{SIRmodel}
and Lemma~\ref{lemma1:SIR}, we have
$\frac{dS(t)}{dt} = -\beta S(t)I(t) \leq 0$
so that $S(t)$ is a decreasing function and
$\lim_{t\to \infty}S(t) = S_\infty$,
where $S_\infty$ is a finite number. From the third
equation of \eqref{SIRmodel}, we have
$\frac{dR(t)}{dt} = \mu I(t) \geq 0$
so that $R(t)$ is an increasing function. By Lemma~\ref{lemma2:SIR},
$\lim_{t\to \infty}R(t) = R_\infty$, where $R_\infty$ is a finite number.
\end{proof}

\begin{lemma}
\label{lemm6:SIR}
Let $(S(t),I(t),R(t))$ be a solution of the SIR model
\eqref{SIRmodel}--\eqref{eq4:SIR}. Then
$I(t) \to 0$ as $t \to \infty$.
\end{lemma}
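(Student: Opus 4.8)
The plan is to combine the conservation law from Proposition~\ref{prop:N:SIR:const} with the convergence results of Lemma~\ref{lemm5:SIR}. First I would observe that, since $S(t)+I(t)+R(t)=N$ for all $t\geq 0$, one can write $I(t)=N-S(t)-R(t)$. Because Lemma~\ref{lemm5:SIR} guarantees $S(t)\to S_\infty$ and $R(t)\to R_\infty$ with both limits finite, the function $I(t)$ also converges, to the finite limit $I_\infty := N - S_\infty - R_\infty$. By Lemma~\ref{lemma1:SIR} we have $I(t)\geq 0$, hence $I_\infty \geq 0$. It then remains only to rule out $I_\infty>0$.

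Next I would argue by contradiction. Suppose $I_\infty>0$. By the convergence just established, there exist $T>0$ and a constant $c:=I_\infty/2>0$ such that $I(t)\geq c$ for all $t\geq T$. Integrating the third equation of \eqref{SIRmodel} from $T$ to $t$ yields
\begin{equation*}
R(t) = R(T) + \mu\int_T^t I(s)\,ds \geq R(T) + \mu\,c\,(t-T),
\end{equation*}
so that $R(t)\to +\infty$ as $t\to\infty$. This contradicts the fact, established in Lemma~\ref{lemm5:SIR}, that $R(t)\to R_\infty$ is finite. Therefore $I_\infty=0$, that is, $I(t)\to 0$ as $t\to\infty$.

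The steps here are all routine; the only delicate point is securing the convergence of $I(t)$ in the first place. Since $I$ need not be monotone --- it may rise and then fall, as noted after Lemma~\ref{lemma4:SIR} --- one cannot deduce its convergence directly from a monotonicity argument. The key that removes this obstacle is the conservation identity $S+I+R=N$, which expresses $I$ as the difference of $N$ and two convergent monotone quantities and thereby forces $I$ to have a finite limit. Once this convergence is in hand, the finiteness of $R_\infty$ (equivalently, the integrability of $I$ over $[0,\infty)$) pins that limit down to zero.
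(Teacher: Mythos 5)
Your proof is correct, and it takes a genuinely different---and in fact tighter---route than the paper's. The paper also starts from Lemma~\ref{lemm5:SIR}, but in the opposite direction: it writes $R(t)=\mu\int_0^t I(s)\,ds$, deduces from the finiteness of $R_\infty$ that $\int_0^\infty I(s)\,ds$ converges, and then passes from integrability to $I(t)\to 0$ by asserting that the series $\sum_{n=0}^{\infty} I(n)$ converges. That last passage is the weak link: integrability of a continuous function implies neither the convergence of its integer samples nor its decay at infinity (think of a sum of ever-thinner spikes of height $1$); to repair it one needs extra information, such as uniform continuity of $I$ (a Barbalat-type argument, available here since $I'$ is bounded) or monotonicity. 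Your argument sidesteps exactly this issue: by invoking the conservation law of Proposition~\ref{prop:N:SIR:const}---which the paper's proof never uses---you write $I=N-S-R$ as a combination of convergent quantities, so $I(t)$ is known to have a limit $I_\infty\geq 0$ before any integration enters; the linear lower bound $R(t)\geq R(T)+\mu\,c\,(t-T)$ then rules out $I_\infty>0$. What your route buys is a fully rigorous proof with only elementary tools, at the modest price of leaning on the conservation identity; the paper's route, once patched, would be the one that generalizes to systems lacking such an identity. As written, yours is complete and the paper's has a genuine gap.
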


\begin{proof}
By \eqref{SIRmodel}--\eqref{eq4:SIR},
and knowing that from Lemma~\ref{lemm5:SIR}
$\lim_{t \to \infty} R(t) = R_\infty$, so that
$\lim_{t \to \infty} \mu \int_0^t I(s) ds= \dfrac{R_\infty}{\mu}$,
we obtain
\begin{equation*}
\int\dfrac{dR}{dt} = \int \mu I dt
\Rightarrow R(t) = \mu \int_{0}^{t} I(s) ds
\Rightarrow \lim_{t \to \infty} \dfrac{R(t)}{\mu}
= \lim_{t \to \infty}  \mu \int_{0}^{t} I(s) ds.
\end{equation*}
Therefore, $\lim_{t \to \infty} \int_{0}^{t} I(s) ds$ converge.
Thus $\sum_{n=0}^{\infty}I(n)$ is convergent and
$\lim_{t \to \infty}I(t)  =0$ \cite{bloch}.
\end{proof}

We have shown that $0 \leq S(t), I(t), R(t) \leq N$ over the entire time course
and $S(t)$, $R(t)$ and $I(t)$ converge to finite numbers as $t \to \infty$.
These results are important in that they allow us to examine the model even further.
The important point we want to address now is: how many secondary infectious
appear in the population after we introduce one infective into it?
If everyone is initially susceptible ($S_0 \approx N$), then
\begin{equation*}
\dfrac{dI}{dt} =
\beta S I - \mu I
= \left( \beta S  - \mu \right)I
\approx \left( \beta N  - \mu \right)I.
\end{equation*}
It means a newly introduced infected individual can be expected to infect other
people at the rate $\beta N$ during the expected infectious period $\dfrac{1}{\mu}$.
Thus, this first infective individual can be expected to infect
$R_0=\dfrac{\beta N}{\mu}=\dfrac{N}{p}$ individuals.
The number $R_0$ is called the basic reproduction number, which is one of the
most important quantities to be considered when analyzing any epidemic model.
It is a decreasing function for $R_0<1$, in which case the population
of infectious dies out naturally. However, if $R_0>1$, then it is an increasing
function and the virus spreads. More precisely, the virus spreads if $N>1$,
respectively $S_0>p$, i.e., the initial number of susceptible exceeds
the threshold value $p$; otherwise the virus will die out.
To sum it up: if a small group of infectious is inserted into a group of
susceptible $S_0$ and $S_0<p$, then the virus will die out rapidly.
On the other hand, if $S_0>p$, then $I(t)$ increases as $S(t)$ decreases to $p$,
where it achieves its maximum value, then $I(t)$ starts to decrease (when $S(t)$
falls bellow the threshold value $p$). These conclusions are summarized in the
following result.

\begin{theorem}
\label{theorm1:SIR}
Let $(S(t),I(t),R(t))$ be a solution of the SIR model
\eqref{SIRmodel}--\eqref{eq4:SIR} in
$T = \left\{ (S,I)  :  S \geq 0,  I \geq 0,   S+I \leq N \right\}$.
If $S_0<p$, then $I(t)$ decreases to $0$ as $t \to \infty$.
If $S_0>p$, then $I(t)$ increases until it attains its maximum value.
After that point, and according to Lemma~\ref{lemm6:SIR},
$I(t)$ decreases  to $0$ as $t \to \infty$; also,
$S(t)$ is a decreasing function and the limiting value $S_\infty$ is
the unique root of the equation
$I_0+S_0-S_\infty + p\ln\left(\dfrac{S_\infty}{S_0}\right)=0$.
\end{theorem}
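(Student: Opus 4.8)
The plan is to exploit the factored form of the infectious equation. Rewriting the second equation of \eqref{SIRmodel} as $\frac{dI}{dt} = (\beta S(t) - \mu)I(t) = \beta(S(t) - p)I(t)$ and recalling from Lemma~\ref{lemma1:SIR} that $I(t) \geq 0$ and $\beta > 0$, the sign of $I'(t)$ is governed entirely by the sign of $S(t) - p$. First I would record that $S$ is nonincreasing: from $\frac{dS}{dt} = -\beta S(t)I(t) \leq 0$ (Lemma~\ref{lemma1:SIR}), the function $S$ decreases monotonically, and by Lemma~\ref{lemm5:SIR} it converges to a finite limit $S_\infty \in (0, S_0]$.

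For the first assertion, suppose $S_0 < p$. Since $S$ is nonincreasing, $S(t) \leq S_0 < p$ for all $t \geq 0$, whence $S(t) - p < 0$ and therefore $\frac{dI}{dt} \leq 0$, so the infectious curve is monotonically decreasing. Combined with Lemma~\ref{lemm6:SIR}, which guarantees $I(t) \to 0$, this shows $I$ decreases to $0$.

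For the second assertion, suppose $S_0 > p$. At $t = 0$ we have $S(0) - p > 0$, so $\frac{dI}{dt} > 0$ and $I$ initially increases. The crucial step is to show that $S$ must eventually drop below $p$. I would argue by contradiction: if $S(t) \geq p$ for all $t$, then $\frac{dI}{dt} = \beta(S(t) - p)I(t) \geq 0$ for all $t$, so $I$ would be nondecreasing and hence $I(t) \geq I_0 > 0$ for all $t$, contradicting $I(t) \to 0$ from Lemma~\ref{lemm6:SIR}. Thus there is a finite time $t_1$ with $S(t_1) < p$, and since $S(0) = S_0 > p$ with $S$ continuous, the intermediate value theorem yields a time $t^\ast$ at which $S(t^\ast) = p$. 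For $t < t^\ast$ we have $S(t) > p$ and $I' > 0$, while for $t > t^\ast$ we have $S(t) < p$ and $I' < 0$; hence $I$ attains its maximum at $t^\ast$ and is strictly decreasing thereafter, converging to $0$ again by Lemma~\ref{lemm6:SIR}.

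It remains to identify $S_\infty$. Passing to the limit $t \to \infty$ in the relation of Lemma~\ref{lemm3:SIR}, and using $S(t) \to S_\infty$ together with $I(t) \to 0$ and the continuity of $S \mapsto I(S)$, I obtain $0 = I_0 + S_0 - S_\infty + p\ln\left(\frac{S_\infty}{S_0}\right)$, which is the claimed equation. Uniqueness of the root was already established in the discussion following Lemma~\ref{lemm3:SIR}: the right-hand side $g(S) = I_0 + S_0 - S + p\ln\left(\frac{S}{S_0}\right)$ satisfies $g'(S) = \frac{p}{S} - 1$, is strictly increasing on $(0, p)$, tends to $-\infty$ as $S \to 0^+$, and is positive at $S = S_0$, so it has exactly one zero in $(0, S_0)$. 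I expect the main obstacle to be the contradiction argument in the case $S_0 > p$, where the monotonicity of $S$ and the convergence $I(t) \to 0$ must be combined to force $S$ across the threshold $p$; the remaining steps reduce to sign-chasing and the lemmas already proven.
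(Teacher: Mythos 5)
Your proof is correct, but it takes a genuinely different route from the paper's. The paper never writes a formal proof of Theorem~\ref{theorm1:SIR}: the theorem is presented as a summary of the phase-plane discussion surrounding Lemma~\ref{lemm3:SIR}, where the relation $I(S)=I_0+S_0-S+p\ln\left(S/S_0\right)$ and its derivative $dI/dS=p/S-1$ show that $I$, viewed as a function of $S$, increases for $S<p$ and decreases for $S>p$; combined with the monotone decrease of $S(t)$ in time, this yields the rise of $I$ to its peak (attained when $S=p$) and its subsequent decay, with $S_\infty$ defined as the unique root of $I(S)=0$. You instead argue in the time domain: the factorization $I'(t)=\beta\left(S(t)-p\right)I(t)$ reduces everything to the sign of $S(t)-p$; your contradiction argument (if $S\geq p$ forever, then $I$ is nondecreasing, contradicting Lemma~\ref{lemm6:SIR}) forces $S$ across the threshold; and passing to the limit in Lemma~\ref{lemm3:SIR} identifies the time-limit $S_\infty$ of Lemma~\ref{lemm5:SIR} with the root of the final-size equation. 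Your route supplies rigor the paper leaves implicit--- the paper never justifies that $S$ actually falls below $p$, nor that the limit of $S(t)$ coincides with the phase-plane root---while the paper's route delivers for free the peak value of $I$ (evaluate $I(S)$ at $S=p$), which a pure sign analysis does not. One small point to patch in your argument: the claim that $S(t)<p$ strictly for $t>t^\ast$ needs $S$ to be \emph{strictly} decreasing there, which requires $I(t)>0$ for all $t$, not merely $I(t)\geq 0$ as Lemma~\ref{lemma1:SIR} gives; this follows from uniqueness of solutions, since $I(t_0)=0$ at some finite $t_0$ would force $I\equiv 0$ on the whole time axis, contradicting $I_0>0$.
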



\subsection{Analysis of the SEIR model}

We also prove some fundamental properties
of the SEIR system \eqref{eq1:SEIR}--\eqref{eq5:SEIR}.

\begin{proposition}
\label{prop:N:SEIR:const}
The total population $\hat{N}$ being modelled by the SEIR model \eqref{eq1:SEIR}
is constant in time, that is, $\hat{S}(t) + \hat{E}(t) + \hat{I}(t) + \hat{R}(t)
=  \hat{N}$ for all $t \geq 0$.
\end{proposition}

\begin{proof}
Follows by adding the right-hand side of equations
\eqref{eq1:SEIR}: $\hat{S}'(t)+\hat{E}'(t)+\hat{I}'(t)+\hat{R}'(t)=0$.
\end{proof}

\begin{lemma}
\label{lemma1:SEIR}
Let $\left(S(t), E(t), I(t), R(t)\right)$ be a solution
of the SEIR model \eqref{eq1:SEIR}--\eqref{eq5:SEIR}.
Then  $S(t) \geq 0$, $E(t) \geq 0$ and $I(t) \geq 0$, for all $t \geq 0$.
\end{lemma}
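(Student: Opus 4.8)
The plan is to mirror exactly the contradiction argument used in Lemma~\ref{lemma1:SIR}, but now tracking three nonnegative quantities $\hat{S}$, $\hat{E}$, $\hat{I}$ simultaneously, since their signs are coupled through the SEIR system \eqref{eq1:SEIR}. The key structural observation is that each of the three equations has a right-hand side containing a factor of the very variable whose sign we wish to control: $\hat{S}$ appears in $d\hat{S}/dt = -\hat{\beta}\hat{S}\hat{I}$, $\hat{E}$ appears in the $-\hat{\gamma}\hat{E}$ term, and $\hat{I}$ appears in the $-\hat{\mu}\hat{I}$ term. This multiplicative structure is what lets a variable that starts positive (or nonnegative) only reach zero asymptotically rather than crossing it.

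First I would handle $\hat{S}(t)$, which is the cleanest case. Since $d\hat{S}/dt = -\hat{\beta}\hat{S}\hat{I}$, I would argue by contradiction: if $\hat{S}(t')<0$ for some $t'$, then by continuity and $\hat{S}(0)>0$ the intermediate value theorem yields a first time $t_1$ with $\hat{S}(t_1)=0$, and integrating the first equation from $t_1$ gives $\hat{S}(t)=\hat{S}(t_1)\exp\bigl(-\hat{\beta}\int_{t_1}^{t}\hat{I}(s)\,ds\bigr)=0$ for $t\geq t_1$, contradicting $\hat{S}(t')<0$. This is verbatim the earlier SIR argument. Next I would treat $\hat{I}(t)$: from $d\hat{I}/dt = \hat{\gamma}\hat{E}-\hat{\mu}\hat{I}$, the presence of the source term $\hat{\gamma}\hat{E}$ means I cannot immediately write a pure exponential, so I would instead note that $d\hat{I}/dt \geq -\hat{\mu}\hat{I}$ and again suppose a first zero-crossing at $t_1$; at such a first crossing $\hat{I}(t_1)=0$ with $\hat{I}$ decreasing, forcing $d\hat{I}/dt(t_1)\leq 0$, yet the equation gives $d\hat{I}/dt(t_1)=\hat{\gamma}\hat{E}(t_1)\geq 0$ provided $\hat{E}(t_1)\geq 0$. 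The same comparison argument handles $\hat{E}(t)$ via $d\hat{E}/dt = \hat{\beta}\hat{S}\hat{I}-\hat{\gamma}\hat{E}\geq -\hat{\gamma}\hat{E}$, using that the source $\hat{\beta}\hat{S}\hat{I}$ is nonnegative once $\hat{S}$ and $\hat{I}$ are known nonnegative.

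The main obstacle is the coupling: the nonnegativity of $\hat{E}$ relies on $\hat{S}$ and $\hat{I}$ being nonnegative, the nonnegativity of $\hat{I}$ relies on $\hat{E}$ being nonnegative, so a naive sequential proof risks circularity. I would resolve this by treating the first potential sign violation among the three variables jointly: let $t_1$ be the infimum of times at which any of $\hat{S}$, $\hat{E}$, $\hat{I}$ first becomes negative. On $[0,t_1)$ all three are nonnegative by definition of $t_1$, so every source term ($\hat{\beta}\hat{S}\hat{I}$ in the $\hat{E}$ equation and $\hat{\gamma}\hat{E}$ in the $\hat{I}$ equation) is nonnegative there, and continuity gives $\hat{S}(t_1),\hat{E}(t_1),\hat{I}(t_1)\geq 0$. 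Applying the differential inequalities $d\hat{S}/dt\geq-\hat{\beta}\hat{I}\,\hat{S}$, $d\hat{E}/dt\geq-\hat{\gamma}\hat{E}$, $d\hat{I}/dt\geq-\hat{\mu}\hat{I}$ at $t_1$ then shows none of the three derivatives can be strictly negative when its variable equals zero, so no variable can cross into negative territory, and $t_1=\infty$. This joint treatment dissolves the apparent circularity; the rest is the routine Gr\"onwall-type exponential estimate already demonstrated in the SIR case.
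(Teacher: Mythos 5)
Your instinct here is better than the paper's own proof: the paper handles only $\hat S$ with the exponential/contradiction argument and then simply writes that ``similarly'' one proves $\hat E(t)\geq 0$ and $\hat I(t)\geq 0$, ignoring exactly the difficulty you identify, namely that the $\hat E$ and $\hat I$ equations carry source terms ($\hat\beta \hat S\hat I$ and $\hat\gamma\hat E$) whose signs are the very thing being proved, so the pure integrating-factor formula used for $\hat S$ no longer closes the argument and a joint treatment of the coupled signs is needed. So your plan is a genuinely different, and more honest, route than the paper's.

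However, your final step has a real gap. From ``at the first crossing time $t_1$ the offending variable is zero and its derivative is $\geq 0$'' you conclude that ``no variable can cross into negative territory.'' That inference is not valid on its own: a derivative that is merely $\geq 0$ at a tangency point does not prevent the function from becoming negative immediately afterwards; this is precisely the delicate point in Nagumo-type invariance theorems, and for non-Lipschitz fields it genuinely fails (e.g.\ $x'=-3x^{2/3}$ admits the solution $x(t)=-(t-t_1)^3$, which sits at $0$ with derivative $0$ and then crosses). Concretely, your case analysis only works when exactly one of $\hat E,\hat I$ vanishes at $t_1$, since then the other forces a strictly positive derivative; in the degenerate case $\hat E(t_1)=\hat I(t_1)=0$ every derivative you invoke is exactly zero and the argument says nothing. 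Two ways to close the hole: (i) observe that $\hat S(t)=\hat S_0\exp\bigl(-\hat\beta\int_0^t\hat I(s)\,ds\bigr)>0$ unconditionally, so $(\hat S(t_1),0,0,\hat R(t_1))$ is an equilibrium of \eqref{eq1:SEIR}, and uniqueness of solutions (the field is polynomial, hence locally Lipschitz) would force the trajectory to be constant for all time, contradicting $\hat I_0>0$; or (ii) bypass the crossing argument entirely: set $u=\hat E e^{\hat\gamma t}$, $v=\hat I e^{\hat\mu t}$, so that $u'=\hat\beta\hat S e^{(\hat\gamma-\hat\mu)t}v$ and $v'=\hat\gamma e^{(\hat\mu-\hat\gamma)t}u$ with positive coefficients; if $\tau$ were the first zero of $v$, then $u'\geq 0$ on $[0,\tau]$, hence $u\geq u(0)=\hat E_0\geq 0$ there, hence $v'\geq 0$ there, giving $v(\tau)\geq v(0)=\hat I_0>0$, a contradiction. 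Note also that your closing appeal to ``the routine Gr\"onwall-type estimate already demonstrated in the SIR case'' does not apply as stated: the SIR step was an exact solution formula for an equation with no source term, which is exactly what is unavailable for $\hat E$ and $\hat I$.
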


\begin{proof}
The proof is done by contradiction.
We have that $S(t)$ is a continuous function and $S(0)=\hat{S}_0>0$
is the initial condition. Suppose that $S(t')<0$. Then, by applying
the intermediate value theorem, there exist $t_1 \in [0,t']$ such that $S(t_1)=0$.
By using the first equation of \eqref{eq1:SEIR}, we obtain
$S(t)=S(t_1)e^{f}=0$ for $t \geq t_1$, where $f$ is the primitive of
$-\hat{\beta} I(t)$. So $S(t)=0$ for $t \geq t_1$, which is a contradiction.
Then $S(t)\geq 0$. Similarly, we prove that $E(t)\geq 0$ and $I(t)\geq 0$.
\end{proof}

\begin{lemma}
\label{lemma2:SEIR}
Let $\left(S(t), E(t), I(t), R(t)\right)$ be a solution
of the SEIR model \eqref{eq1:SEIR}--\eqref{eq5:SEIR}.
Then $0 \leq R(t) \leq \hat{N}$, $S(t) \leq \hat{N}$, $I(t) \leq \hat{N}$,
and $E(t) \leq \hat{N}$ for all $t\geq 0$.
\end{lemma}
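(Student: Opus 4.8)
The plan is to obtain all four estimates from two ingredients already available: the conservation identity of Proposition~\ref{prop:N:SEIR:const} and the non-negativity of Lemma~\ref{lemma1:SEIR}, following the same pattern used for the SIR model in Lemma~\ref{lemma2:SIR}. The key observation is that once every compartment is known to be non-negative, the partition $\hat{S}(t)+\hat{E}(t)+\hat{I}(t)+\hat{R}(t)=\hat{N}$ immediately forces each individual term to lie in $[0,\hat{N}]$. Since Lemma~\ref{lemma1:SEIR} supplies non-negativity of $S$, $E$, and $I$ but says nothing about $R$, the only step that needs a separate argument is the non-negativity of $R$.

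First I would establish $R(t)\geq 0$. The fourth equation of \eqref{eq1:SEIR} gives $R'(t)=\hat{\mu}I(t)$, and because $\hat{\mu}>0$ while $I(t)\geq 0$ by Lemma~\ref{lemma1:SEIR}, the derivative $R'(t)$ is non-negative for all $t\geq 0$. Thus $R$ is non-decreasing, and combined with the initial condition $\hat{R}(0)=0$ from \eqref{eq5:SEIR} this yields $R(t)\geq 0$, which is the lower bound in the first assertion.

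Next I would read the upper bounds directly off the conservation law. Proposition~\ref{prop:N:SEIR:const} gives $\hat{S}(t)+\hat{E}(t)+\hat{I}(t)+\hat{R}(t)=\hat{N}$ for all $t\geq 0$; together with $S(t),E(t),I(t)\geq 0$ from Lemma~\ref{lemma1:SEIR} and $R(t)\geq 0$ from the previous step, every summand is a non-negative quantity whose total equals $\hat{N}$. Hence no single compartment can exceed $\hat{N}$, giving $S(t)\leq\hat{N}$, $E(t)\leq\hat{N}$, $I(t)\leq\hat{N}$, and $R(t)\leq\hat{N}$ at once, and completing the bound $0\leq R(t)\leq\hat{N}$. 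The argument is elementary throughout; the only point deserving care, rather than being inherited verbatim from Lemma~\ref{lemma1:SEIR}, is the non-negativity of $R$, after which the four upper estimates are a one-line consequence of the population identity.
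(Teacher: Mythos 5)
Your proof is correct and follows essentially the same route as the paper: non-negativity of $R$ via $R'(t)=\hat{\mu}I(t)\geq 0$ together with $R(0)=0$, and then the conservation identity of Proposition~\ref{prop:N:SEIR:const} combined with non-negativity of all compartments to bound each one by $\hat{N}$. The only cosmetic difference is that you bound all four compartments in a single stroke from the partition, whereas the paper spells out the rearrangement $R(t)=\hat{N}-S(t)-E(t)-I(t)\leq\hat{N}$ (and similarly for $S$, $E$, $I$) term by term.
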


\begin{proof}
From the fourth equation of \eqref{eq1:SEIR}, we have
$\dfrac{dR(t)}{dt} = \hat{\mu} I(t)$. Then $\dfrac{dR(t)}{dt} \geq 0$ for all
$t\geq 0$ and $R(t) \geq \hat{R}_0 =0$. By Lemma~\ref{lemma2:SEIR}, we know that
$S(t) \geq 0$, $E(t) \geq 0$ and $I(t) \geq 0$, so $-S(t)-E(t)-I(t) \leq 0$.
Therefore, $R(t)=\hat{N}-S(t)-E(t)-I(t) \leq \hat{N}$, and
$R(t)\leq \hat{N}$ for  all  $t\geq 0$. For the susceptible individuals,
$S(t)=\hat{N} -E(t)-I(t)-R(t) \leq \hat{N}$ because $E(t)+I(t)+R(t) \geq 0$.
Similarly, $E(t) \leq \hat{N}$ and $I(t) \leq \hat{N}$.
\end{proof}

\begin{lemma}
\label{lemm3:SEIR}
Let $\left(S(t), E(t), I(t), R(t)\right)$ be a solution
of the SEIR model \eqref{eq1:SEIR}--\eqref{eq5:SEIR}.
Then $R(t) \to \hat{R}_\infty$ as $t \to \infty$, where
$\hat{R}_\infty$ is a finite number.
\end{lemma}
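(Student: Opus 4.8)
The plan is to imitate the $R(t)$ convergence argument from the proof of Lemma~\ref{lemm5:SIR}, since the fourth equation of the SEIR system \eqref{eq1:SEIR} has the same structure as the third equation of the SIR system \eqref{SIRmodel}. The whole point is that $R(t)$ is monotone and bounded, hence convergent by the monotone convergence principle for real-valued functions.

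First I would read off from the fourth equation of \eqref{eq1:SEIR} that $\dfrac{dR(t)}{dt} = \hat{\mu}\, I(t)$. By Lemma~\ref{lemma1:SEIR} we have $I(t) \geq 0$ for all $t \geq 0$, and since $\hat{\mu} > 0$ this yields $\dfrac{dR(t)}{dt} \geq 0$. Therefore $R(t)$ is a nondecreasing function of $t$.

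Next I would invoke Lemma~\ref{lemma2:SEIR}, which guarantees $0 \leq R(t) \leq \hat{N}$ for all $t \geq 0$; in particular $R(t)$ is bounded above by the constant total population $\hat{N}$.

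Finally, a function that is monotone nondecreasing and bounded above necessarily tends to a finite limit as $t \to \infty$. Hence $R(t) \to \hat{R}_\infty$ for some finite $\hat{R}_\infty$ with $\hat{R}_\infty \leq \hat{N}$, which is exactly the claimed conclusion. I do not expect any genuine obstacle here: the only ingredients are the nonnegativity of $I$ (Lemma~\ref{lemma1:SEIR}) and the upper bound on $R$ (Lemma~\ref{lemma2:SEIR}), both already established, combined with the standard monotone convergence argument.
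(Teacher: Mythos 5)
Your proof is correct and follows essentially the same route as the paper: monotonicity of $R(t)$ from the fourth equation of \eqref{eq1:SEIR} together with the nonnegativity of $I(t)$ (Lemma~\ref{lemma1:SEIR}), boundedness from Lemma~\ref{lemma2:SEIR}, and then the monotone convergence principle. The only difference is that you cite Lemma~\ref{lemma1:SEIR} explicitly for $I(t)\geq 0$, a step the paper leaves implicit, which is a minor improvement in rigor.
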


\begin{proof}
From the fourth equation of \eqref{eq1:SEIR}, we have
$\frac{dR}{dt} = \hat{\mu} I(t) \geq 0$,
so that $R(t)$ is an increasing function. By Lemma~\ref{lemma2:SEIR},
$R(t)$ is bounded. Therefore, $\lim_{t\to \infty}R(t) = \hat{R}_\infty$,
where $\hat{R}_\infty$ is a finite number.
\end{proof}

\begin{lemma}
\label{lemm4:SEIR}
Let $\left(S(t), E(t), I(t), R(t)\right)$ be a solution
of the SEIR model \eqref{eq1:SEIR}--\eqref{eq5:SEIR}.
Then $I(t) \to 0$ as $t \to \infty$.
\end{lemma}

\begin{proof}
By using \eqref{eq1:SEIR}, and knowing from Lemma~\ref{lemm3:SEIR} that
$\lim_{t \to \infty} R(t) = \hat{R}_\infty$, so that
$\lim_{t \to \infty} \hat{\mu} \int_0^t I(s) ds
= \dfrac{\hat{R}_\infty}{\hat{\mu}}$,
we obtain
\begin{equation*}
\int\dfrac{dR}{dt} = \int \hat{\mu} I dt
\Rightarrow  R(t) = \hat{\mu} \int_{0}^{t} I(s) ds
\Rightarrow \lim_{t \to \infty} \dfrac{R(t)}{\hat{\mu}}
= \lim_{t \to \infty}  \hat{\mu} \int_{0}^{t} I(s) ds.
\end{equation*}
Therefore, $\lim_{t \to \infty} \int_{0}^{t} I(s) ds$ converges,
so $\sum_{n=0}^{\infty}I(n)$ is convergent \cite{bloch}.
We conclude that $I(t) \to 0$ as  $t \to \infty$.
\end{proof}


\section{Numerical simulation of the models}
\label{Sec:3}

Now we investigate the numerical simulation of the SIR and SEIR models
proposed for the description of the transmission of the Ebola virus
in the recent outbreak occurred in West Africa \cite{MyID:321,symcomp}.
In the numerical resolution of the nonlinear differential equations
of both SIR and SEIR models, we use the \textsf{Matlab} differential equation
solver \texttt{ode45}. This routine uses a variable step Runge--Kutta
4th/5th-order method to solve numerically the differential equations.

Rachah and Torres considered the system of equations that describe
the SIR model by using the transmission rate parameter $\beta=0.2$,
the recovered rate $\mu=0.1$ and the values
$\left(S(0),I(0),R(0)\right) =\left(0.95, 0.05, 0\right)$
for the initial number of susceptible, infected, and recovered individuals.
These parameters of the SIR model were identified by using recent data from
the World Health Organisation (WHO) \cite{MyID:321}.
Also motivated by the data of WHO, in their numerical study
of the propagation of the virus by the SEIR model,
they considered the SEIR model with a transmission rate $\hat{\beta}=0.2$,
an infectious rate $\hat{\gamma}=0.1887$, a recovered rate $\hat{\mu}=0.1$,
and initial values $\left(\hat{S}(0),\hat{E}(0),\hat{I}(0),\hat{R}(0)\right)
=\left(0.88, 0.07, 0.05, 0\right)$ for the initial number of susceptible,
exposed, infected, and recovered individuals \cite{symcomp}. The numerical
simulations of the SIR and SEIR models were carried out in the previous works
of Rachah and Torres \cite{MyID:321,symcomp}. Here, by using the above mentioned
parameters and initialization values, we investigate the possibility
to obtain curves of infectious that converge at the same time
and have the same maximum numbers of infectious individuals
by the two models.

After an extensive set of numerical tests of the models with several values
of $\hat{\gamma}$, we concluded that convergence between the results
of the two models is possible by increasing the infectious rate. In fact, when
$\hat{\gamma}$ increases, the individuals enter the $I$ class as soon
as the individuals enter the class $E$. In that case, the SIR model
can be viewed as a special case of the SEIR model.
In the analysis and comparison between the numerical resolution of the SIR
and SEIR models, we used the real parameters identified from the real data of the WHO
given by $\hat{\beta}=\beta=0.2$, $\hat{\mu}=\mu=0.1$,
the initial values $\left(\hat{S}(0),\hat{E}(0),\hat{I}(0),\hat{R}(0)\right)
=\left(0.88, 0.07, 0.05, 0\right)$, and several coefficients values
of $\hat{\gamma}$. Our comparison between the numerical results
of the SIR and SEIR models is shown in Figure~\ref{sml_modls_2p47gam}.
\begin{figure}
\centering
\subfloat[Infectious individuals]{%
\includegraphics[scale=0.54]{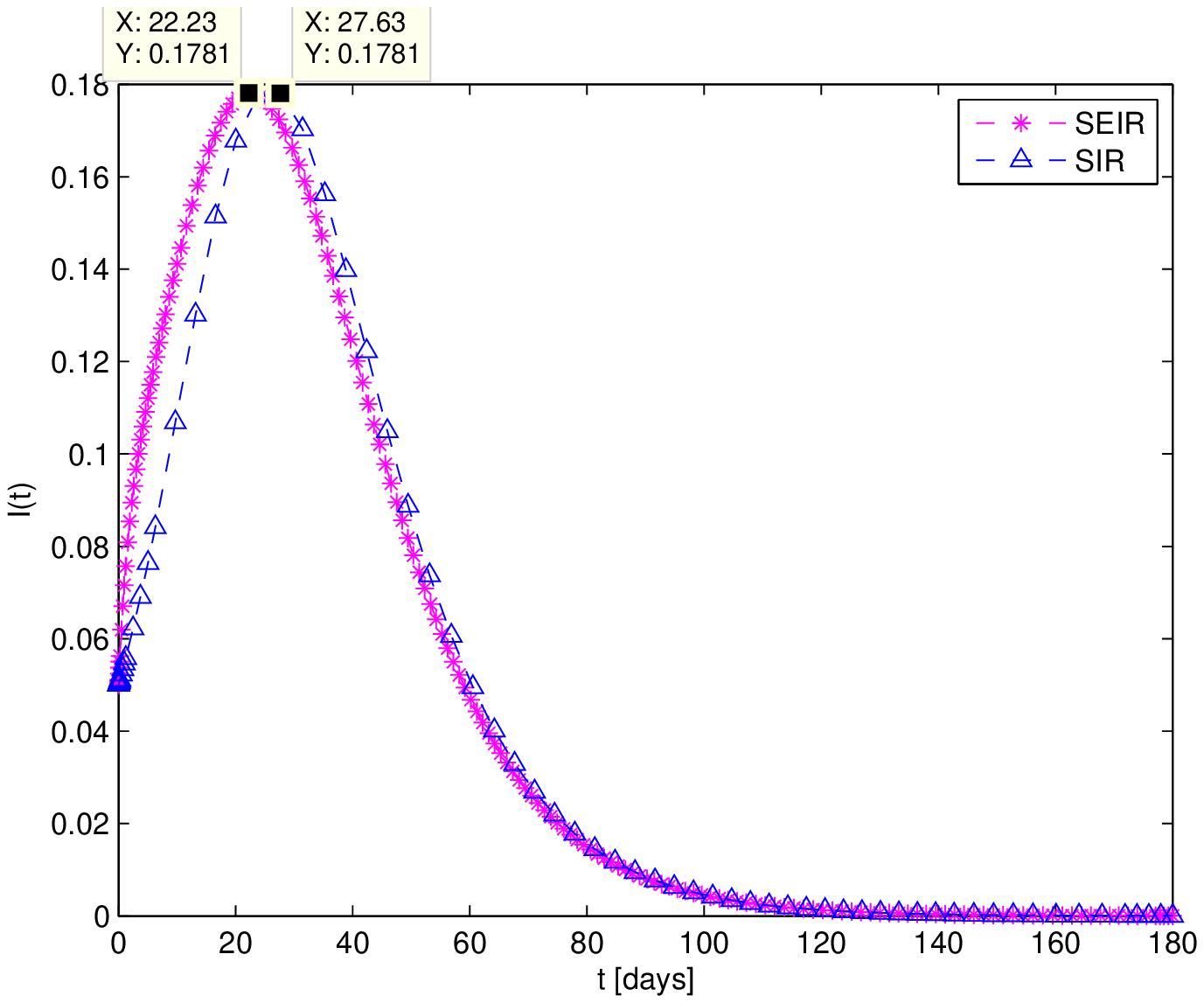}}
\subfloat[Recovered individuals]{%
\includegraphics[scale=0.54]{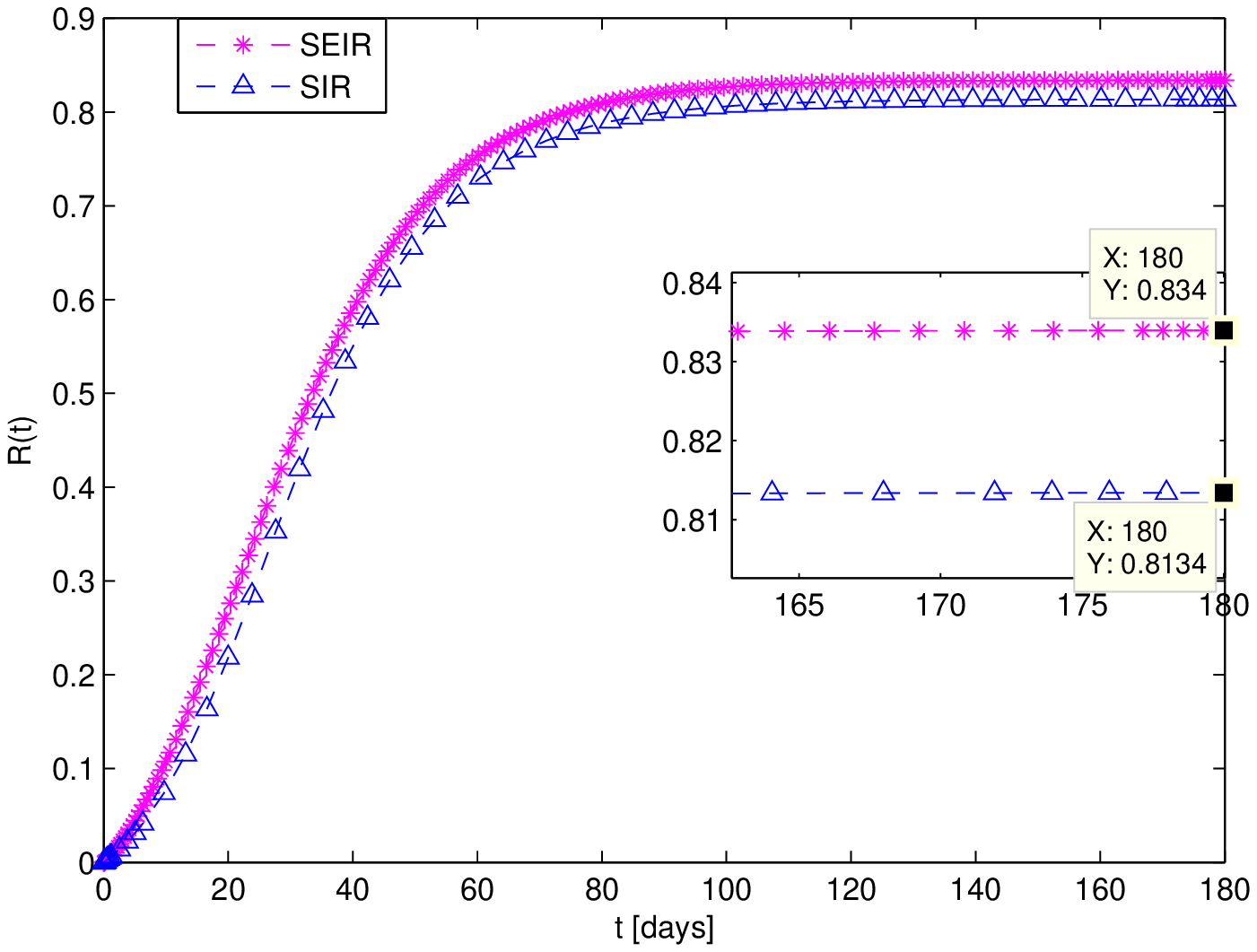}}		
\caption{Comparison between the numerical solution of the SIR and SEIR models
in the case of $\hat{\gamma}= 2.47\times 0.1887$.
\label{sml_modls_2p47gam}}
\end{figure}
The numerical study of the models shows that they have the same maximum number
and a convergence of the curve of infectious at the same time for the value
of infectious rate $\hat{\gamma}= 2.47\times 0.1887$. The meaning of this value
is that when $\hat{\gamma}$ increases, the number of exposed individuals decreases.
Then we obtain a convergence between the curves of infectious of the two models.

For the SEIR model, we chose $\hat{\gamma}= 2.47\times 0.1887$
with a small value of $E(0)$, $\hat{\beta}=\beta$, $\hat{\mu}=\mu$
and $\left(\hat{I}(0),\hat{R}(0)\right) =\left(I(0),R(0)\right)$,
in order to discuss the difference between the models in case
when they have the same maximum number of infectious
with similar curves of infectious. Figure~\ref{sml_modls_2p47gam}
shows the curve of infectious individuals of the SIR model,
which is identical to the SEIR curve of infectious individuals.
The same figure shows that the maximum number of infectious individuals
is the same for the two models and the curves converge at the same time,
and presents the number of recovered individuals for the two models
and for the same maximum number of infectious. By using the SEIR model,
the recovered number of individuals is more important than the number
of recovered individuals of the SIR model. The numerical results show that
even when one model converge to the other one (with respect to the infectious number),
by increasing $\hat{\gamma}$ the SEIR model describes better the propagation
of the virus, which is closer to the reality of Ebola virus, characterized
by a period of incubation described by the exposed group. Even when the number
of exposed is very small, it provides a more detailed description
of the Ebola virus.


\section{Optimal control of the virus by using SIR and SEIR models}
\label{Sec:4}

We now address the question of how to optimally control
the propagation of the spread of Ebola in a population
by using the SIR and SEIR models. Let us start by the
optimal control strategy based on the SIR model,
as studied by Rachah and Torres in \cite{MyID:321},
which is given by the following system
of nonlinear differential equations:
\begin{equation}
\label{SIR_control}
\begin{cases}
\dfrac{dS(t)}{dt} = -\beta S(t)I(t) - u(t) S(t),\\[0.3cm]
\dfrac{dI(t)}{dt} = \beta S(t)I(t) - \mu I(t),\\[0.3cm]
\dfrac{dR(t)}{dt} = \mu I(t) + u(t) S(t).
\end{cases}
\end{equation}
Note that if $u(t) \equiv 0$, then \eqref{SIR_control} reduces to \eqref{SIRmodel}.
Although there are currently no licensed Ebola vaccines, two potential candidates
are undergoing evaluation. The goal of the strategy is to reduce the infected
individuals and the cost of vaccination. Precisely, the optimal control problem
consists of minimizing the objective functional
\begin{equation}
\label{cost_func_strat_sir}
J(u) = \int_{0}^{t_{end}} \left[I(t) + \dfrac{\nu}{2}u^2(t)\right] dt,
\end{equation}
where $u(t)$ is the control variable, which represents the vaccination rate
at time $t$, and the parameters $\nu$ and $t_{end}$ denote, respectively,
the weight on cost and the duration of the vaccination program.

Let us now study the same cost functional of the previous optimal control problem,
by introducing into the model \eqref{eq1:SEIR} a control $\tilde{u}(t)$,
representing the vaccination rate at time $t$. The control $\tilde{u}(t)$
is the fraction of susceptible individuals being vaccinated per unit of time.
Then, the mathematical model with control is given by the following
system of nonlinear differential equations:
\begin{equation}
\label{SEIR_control}
\begin{cases}
\dfrac{d\tilde{S}(t)}{dt}
= -\tilde{\beta} \tilde{S}(t)\tilde{I}(t) - \tilde{u}(t) \tilde{S}(t),\\[0.30cm]
\dfrac{d\tilde{E}(t)}{dt}
= \tilde{\beta} \tilde{S}(t)\tilde{I}(t) - \tilde{\gamma} \tilde{E}(t),\\[0.30cm]
\dfrac{d\tilde{I}(t)}{dt}
= \tilde{\gamma} \tilde{E}(t) - \tilde{\mu} \tilde{I}(t),\\[0.30cm]
\dfrac{d\tilde{R}(t)}{dt} = \tilde{\mu} \tilde{I}(t) + \tilde{u}(t) \tilde{S}(t).
\end{cases}
\end{equation}
The goal of our strategy is to reduce the infected individuals and the cost
of vaccination. Precisely, the optimal control problem consists of minimizing
the objective functional
\begin{equation}
\label{cost_func_strat_seir}
J(\tilde{u}) = \int_{0}^{t_{end}} \left[\tilde{I}(t)
+ \dfrac{\tilde{\tau}}{2}\tilde{u}^2(t)\right] dt,
\end{equation}
where $\tilde{u}(t)$ is the control variable, which represents the vaccination
rate at time $t$, and the parameters $\tilde{\tau}$ and $t_{end}$ denote,
respectively, the weight on cost and the duration of the vaccination program.

Before comparing between the results of control of the two models,
we present in Figure~\ref{sml_cntrl_SIR} the numerical simulation
of the SIR model without control compared with the study of control
of the virus described by the system \eqref{SIR_control}
and the cost functional \eqref{cost_func_strat_sir}. Figure~\ref{sml_cntrl_SIR}
shows the effect of the optimal control strategy in reducing the number of
infectious and the period of infection, and in increasing the number of recovered.

Figure~\ref{sml_cntrl_SEIR} shows the numerical simulation of the SEIR model
without control compared with the study of control of the virus described
by the system \eqref{SEIR_control} and the cost functional
\eqref{cost_func_strat_seir}. The curves of the compartments
in Figure~\ref{sml_cntrl_SEIR} show the effect of the optimal control strategy
in reducing the number of exposed and infectious individuals and the period of
infection, and in increasing the number of recovered individuals.
\begin{figure}
\centering
\subfloat[Susceptible with and without control]{%
\includegraphics[scale=0.54]{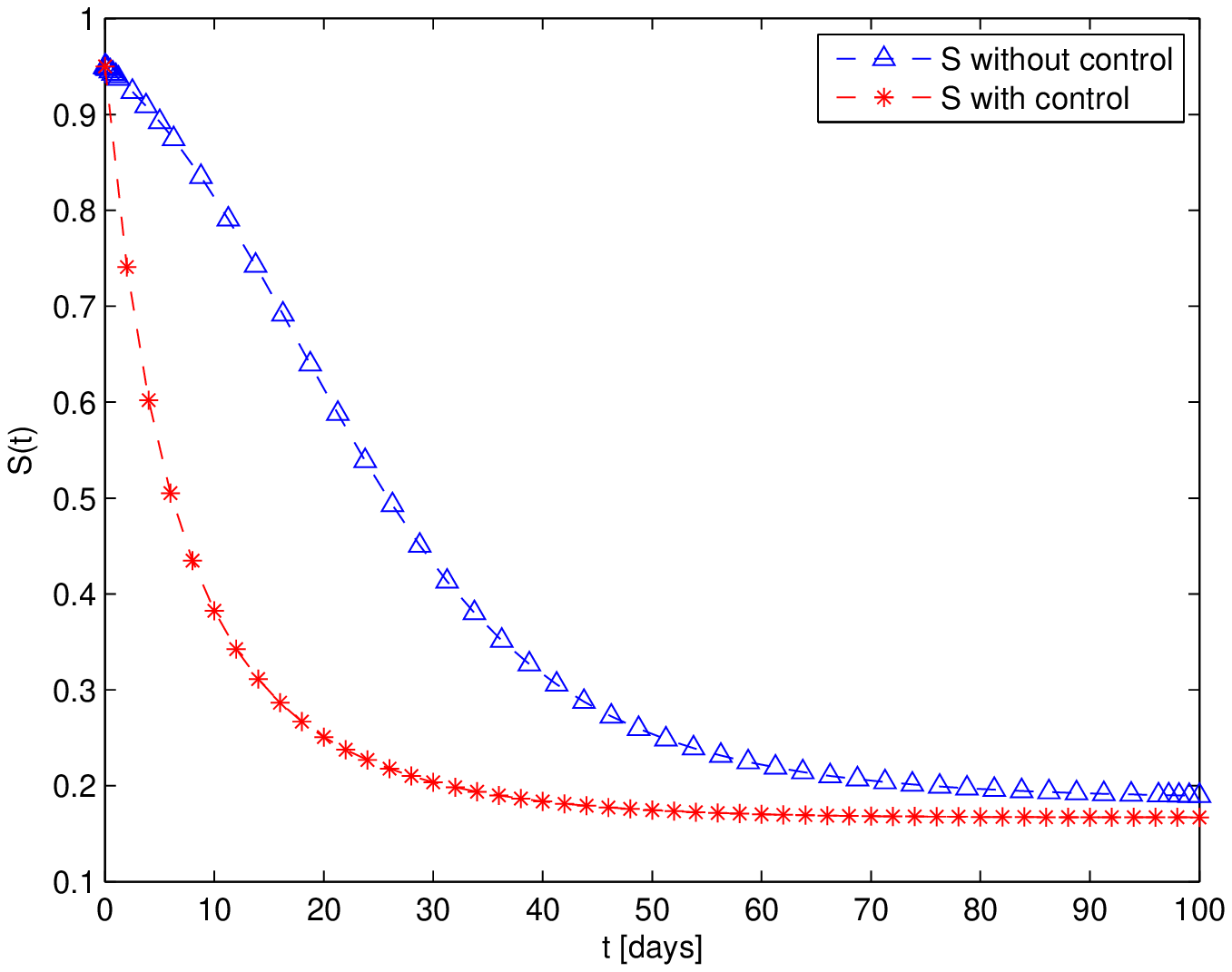}} 	
\subfloat[Infectious with and without control]{%
\includegraphics[scale=0.54]{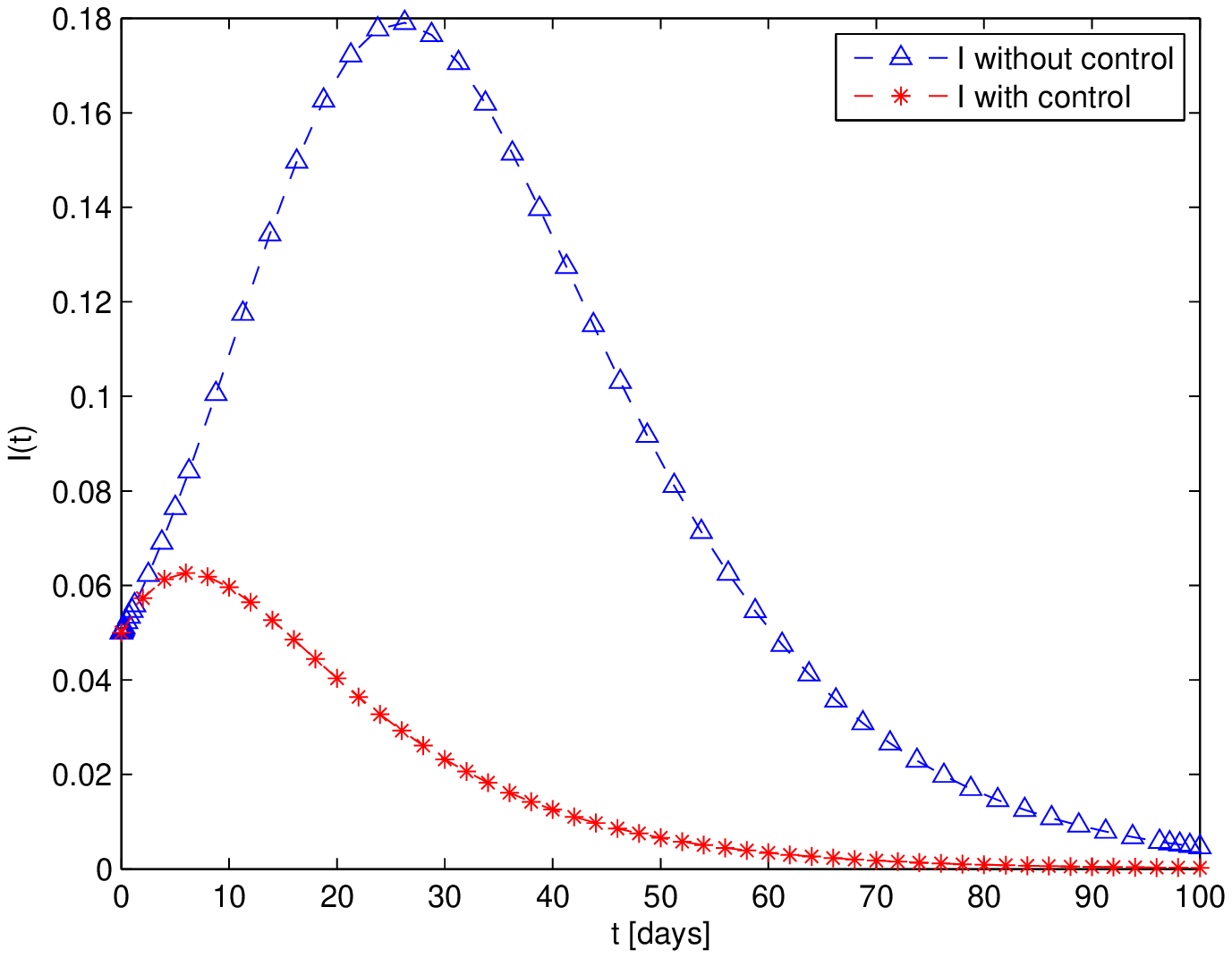}}\\
\subfloat[Recovered with and without control]{%
\includegraphics[scale=0.54]{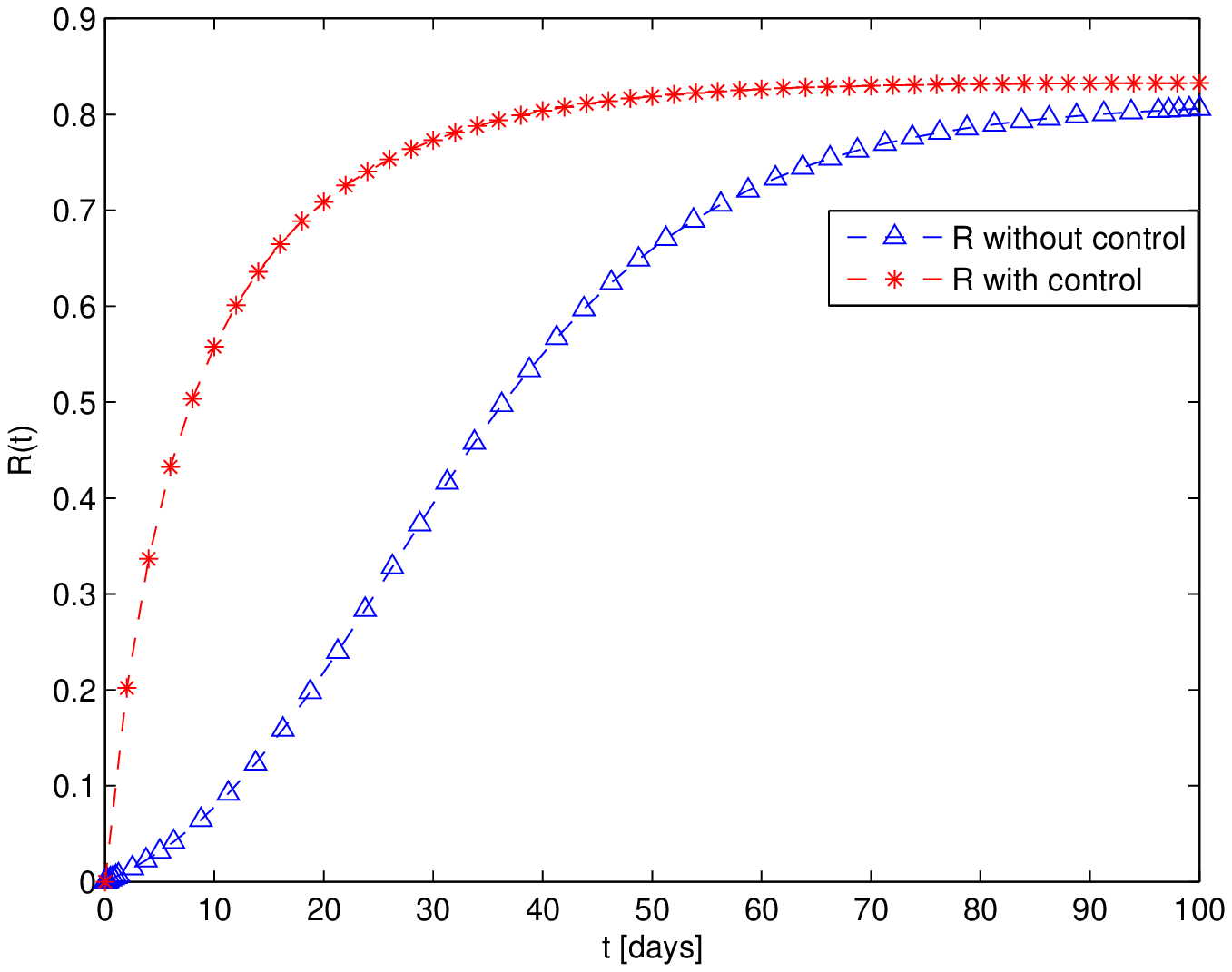}}
\caption{Simulation of Ebola virus and numerical resolution of a strategy
control of the Ebola virus by using the SIR control model described by system
\eqref{SIR_control} and the cost functional \eqref{cost_func_strat_sir}.
\label{sml_cntrl_SIR}}
\end{figure}
\begin{figure}
\centering
\subfloat[Susceptible with and without control]{%
\includegraphics[scale=0.54]{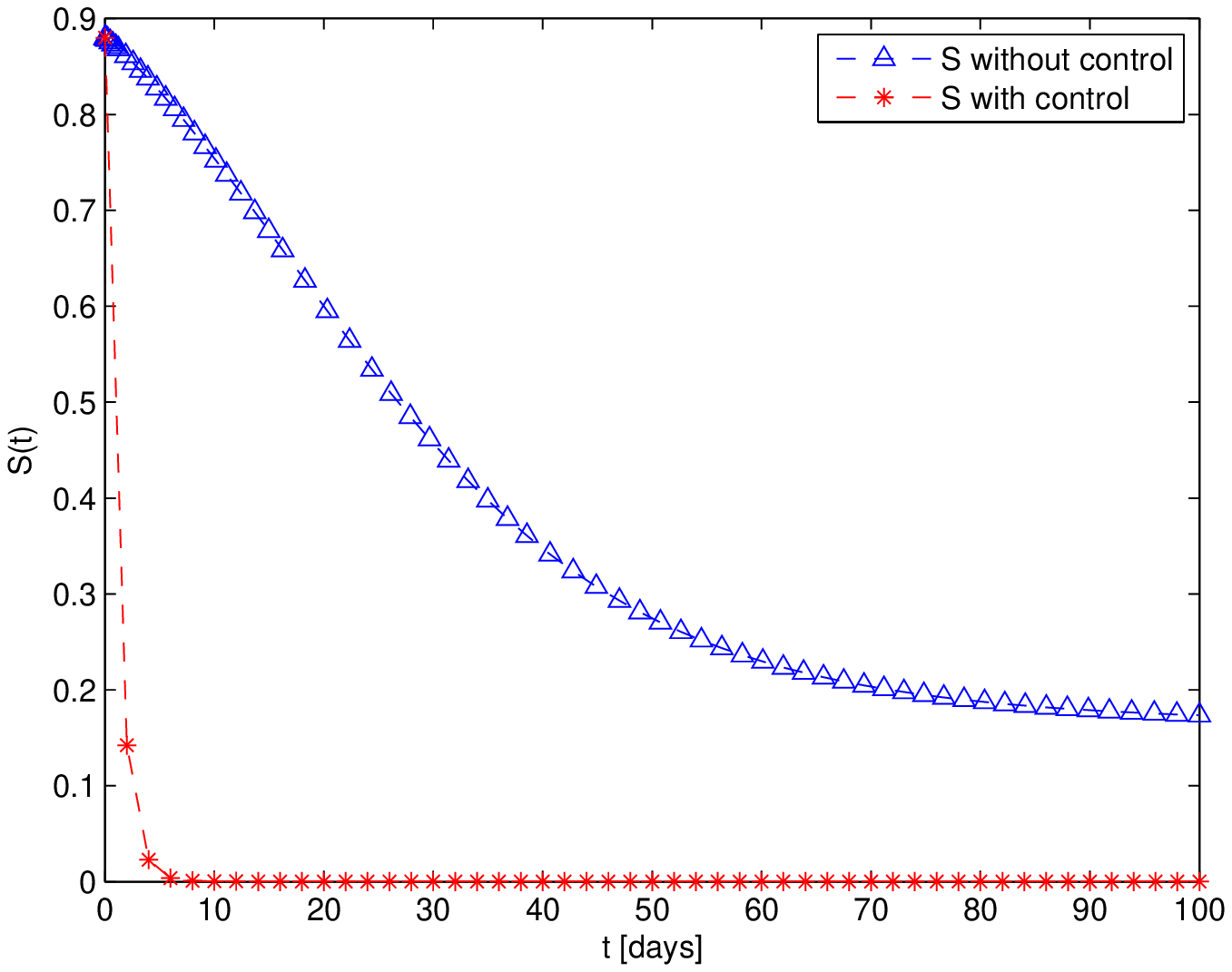}}
\subfloat[Exposed with and without control]{%
\includegraphics[scale=0.54]{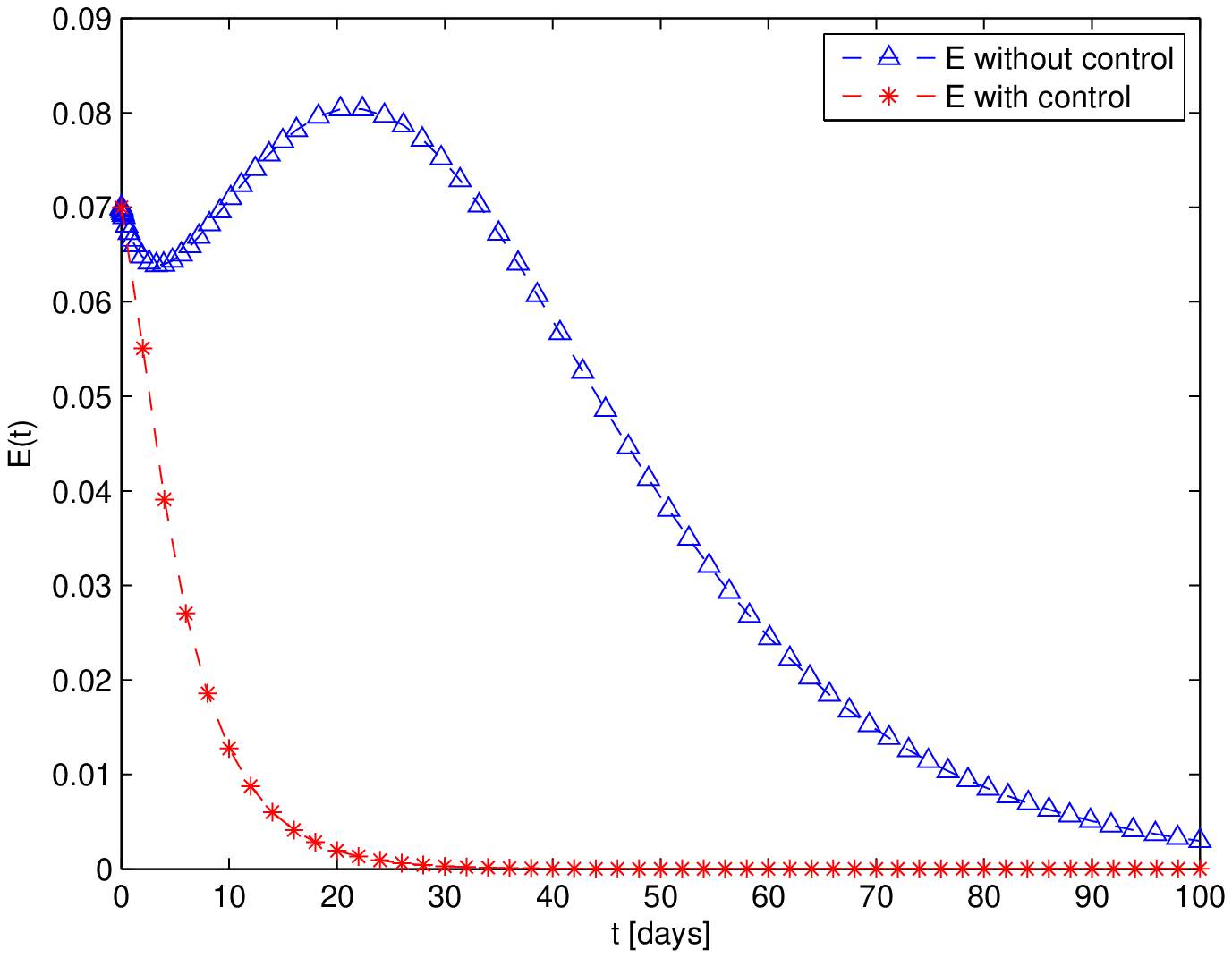}}\\	
\subfloat[Infectious with and without control]{%
\includegraphics[scale=0.54]{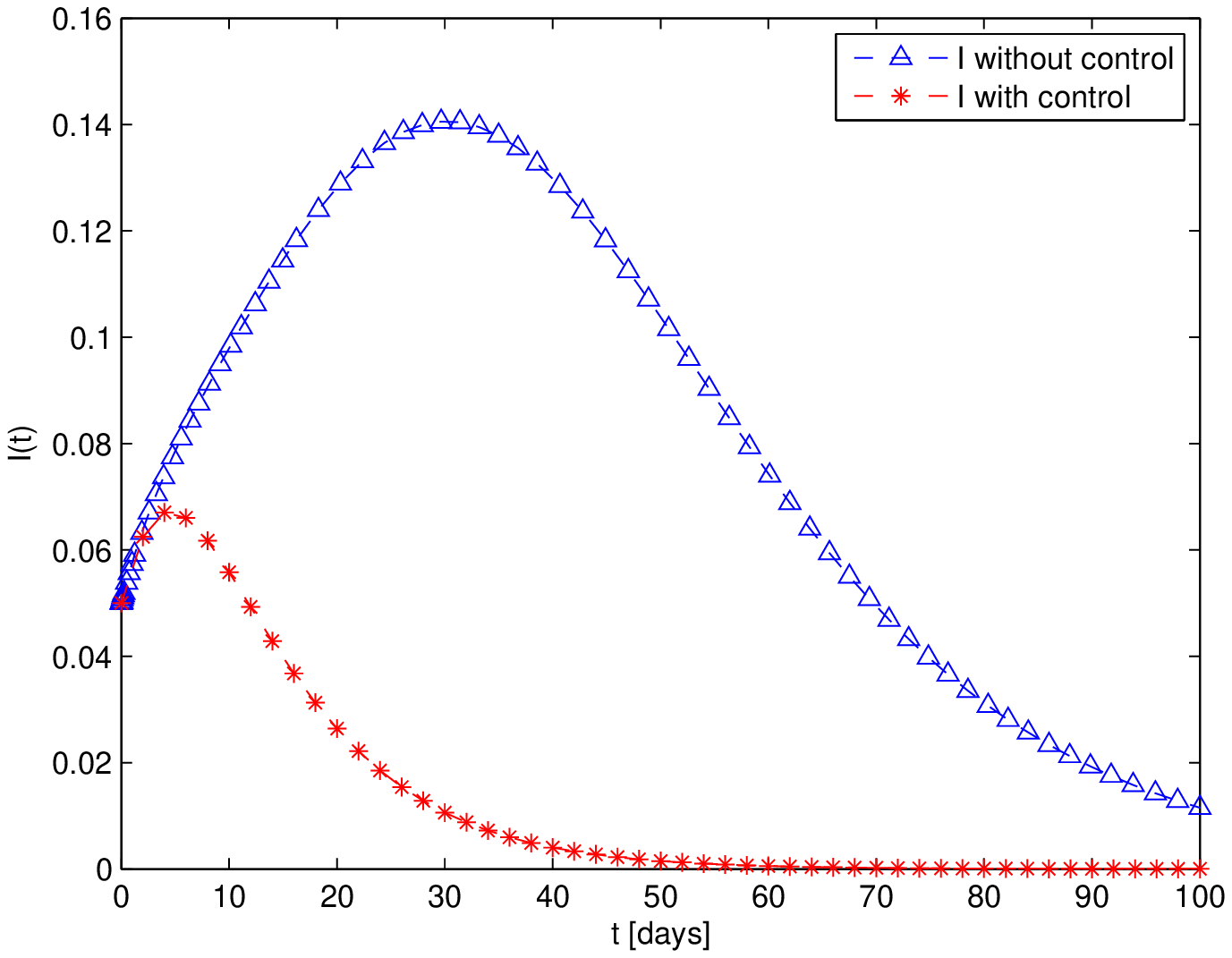}}
\subfloat[Recovered with and without control]{%
\includegraphics[scale=0.54]{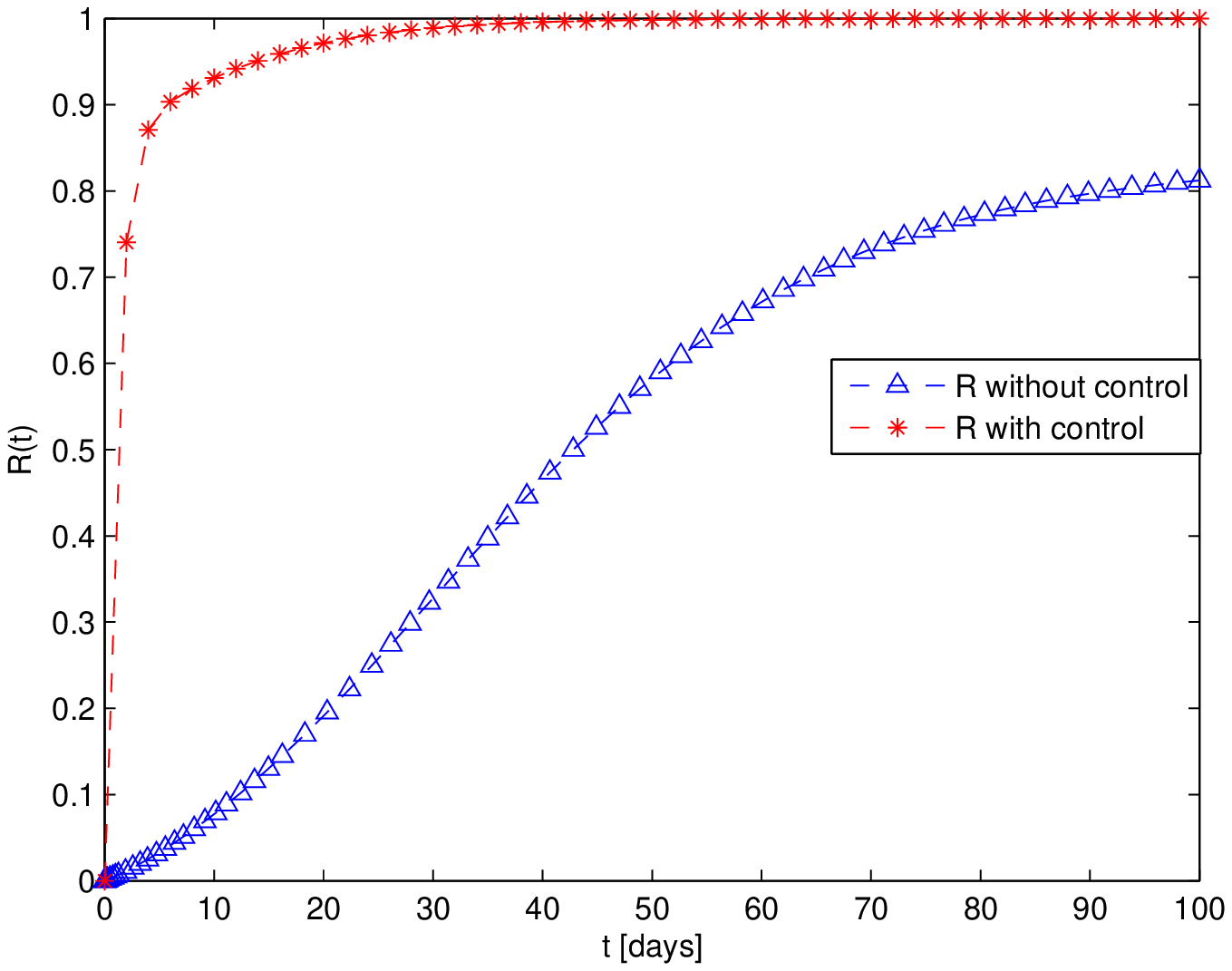}}	
\caption{Simulation of Ebola virus and numerical resolution of a strategy
control of the Ebola virus by using the SEIR control model described by system
\eqref{SEIR_control} and the cost functional \eqref{cost_func_strat_seir}.
\label{sml_cntrl_SEIR}}
\end{figure}

Let us now discuss the comparison between the results of the control strategy
of the SIR and SEIR models. Figure~\ref{compr_cntrl_SEIR_SIR_S} shows that the
number of susceptible decreases faster in case of control of the virus
by using the SEIR model than the case of control by the SIR model.
Figure~\ref{compr_cntrl_SEIR_SIR_I} presents the number of
infectious individuals, which is characterized by a peak in the same time
(4 days) for the two models with a number of infected given by 6.7\%
in case of control by the SEIR model and 5.6\%  in the case of control
by the SIR model. The difference in the number of infectious between
the two models is explained by the fact that the exposed
individuals certainly moved to the infectious class since
the Ebola virus is characterized by a short latent period,
then it spreads very quickly.
The effect of the control strategy is shown in the period of infection,
which is given by 66 days in case of control
by the SEIR model, which is less than the infectious period (78 days) in case
of control by using the SIR model. The number of recovered individuals
is shown in Figure~\ref{compr_cntrl_SEIR_SIR_R}, where it reaches 99\%
at the end of the campaign in case of control by the SEIR model versus 88\%
at the end of the campaign in case of control with the SIR model.
\begin{figure}
\centering
\subfloat[Susceptible]{\includegraphics[scale=0.54]{%
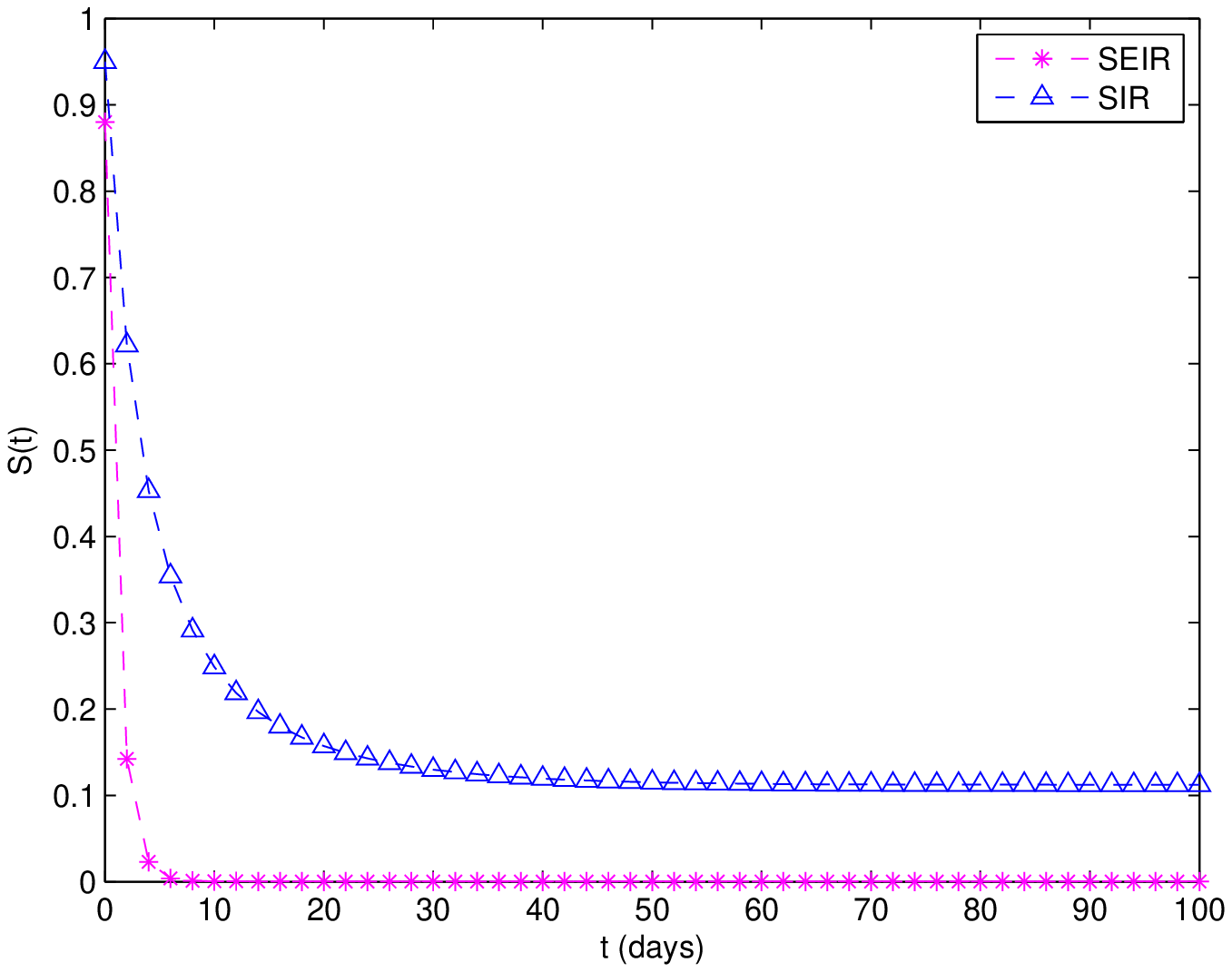}\label{compr_cntrl_SEIR_SIR_S}}
\subfloat[Infectious]{\includegraphics[scale=0.54]{%
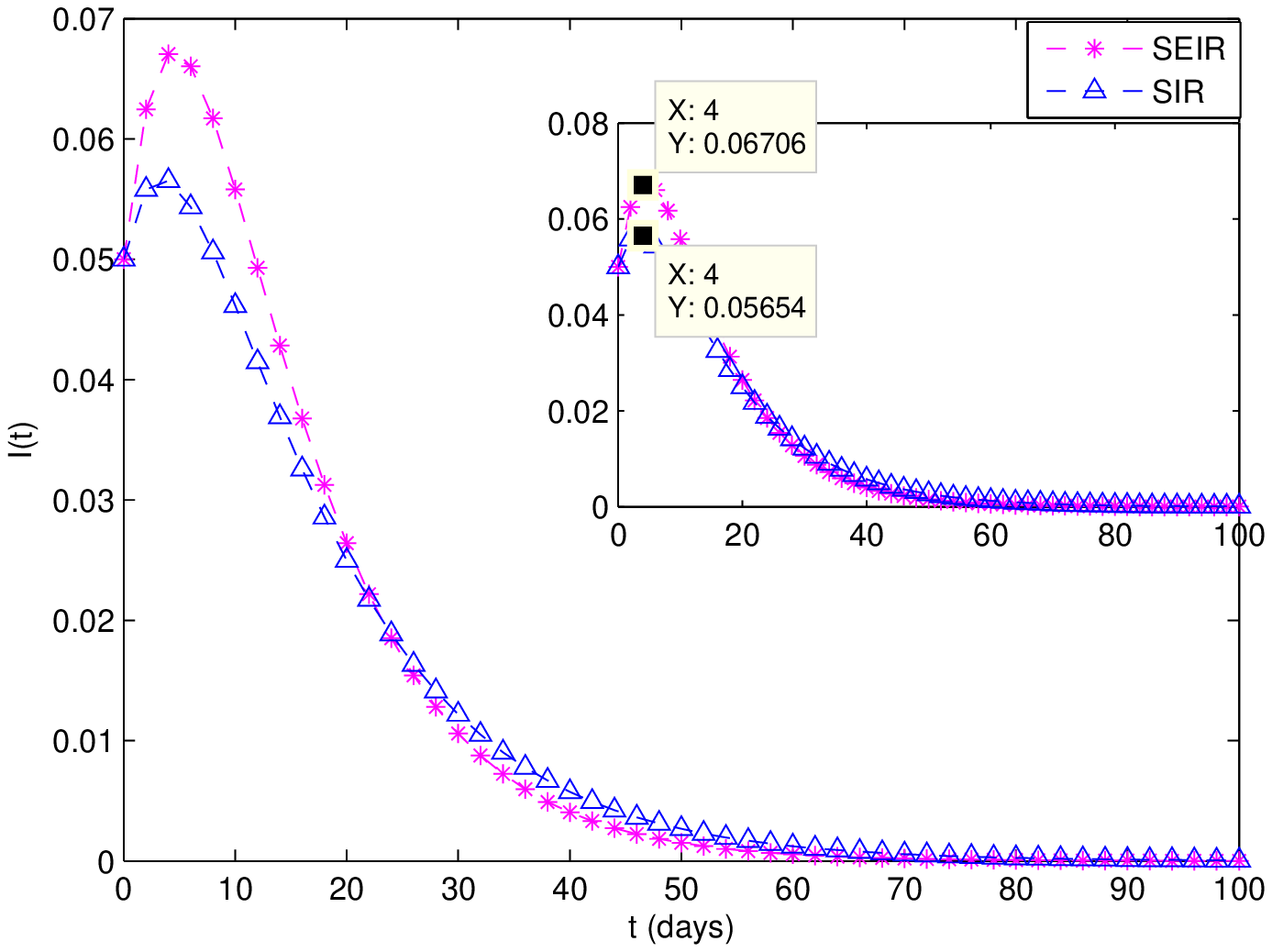}\label{compr_cntrl_SEIR_SIR_I}}\\
\subfloat[Recovered]{%
\includegraphics[scale=0.54]{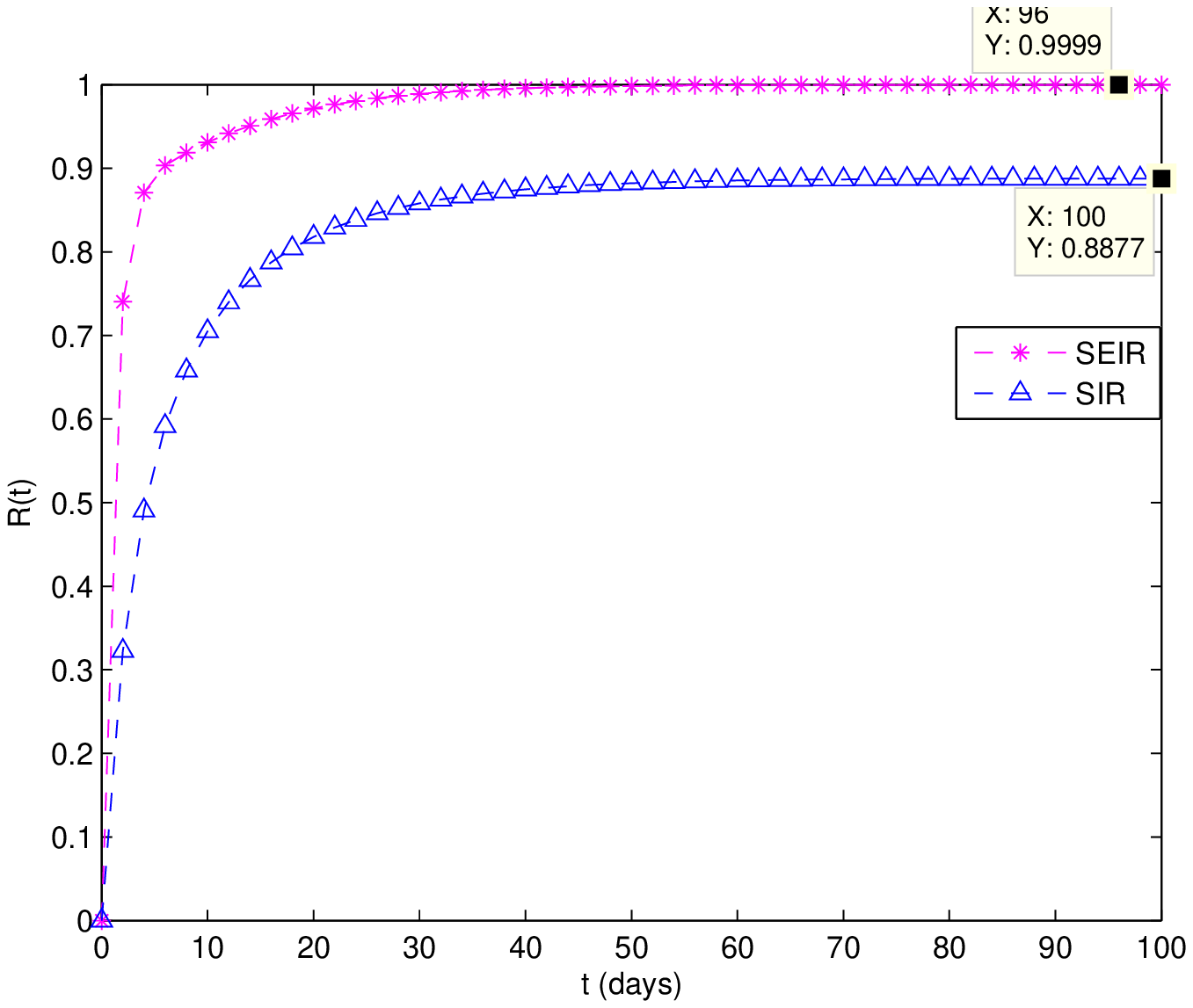}%
\label{compr_cntrl_SEIR_SIR_R}}
\subfloat[Optimal control]{%
\includegraphics[scale=0.54]{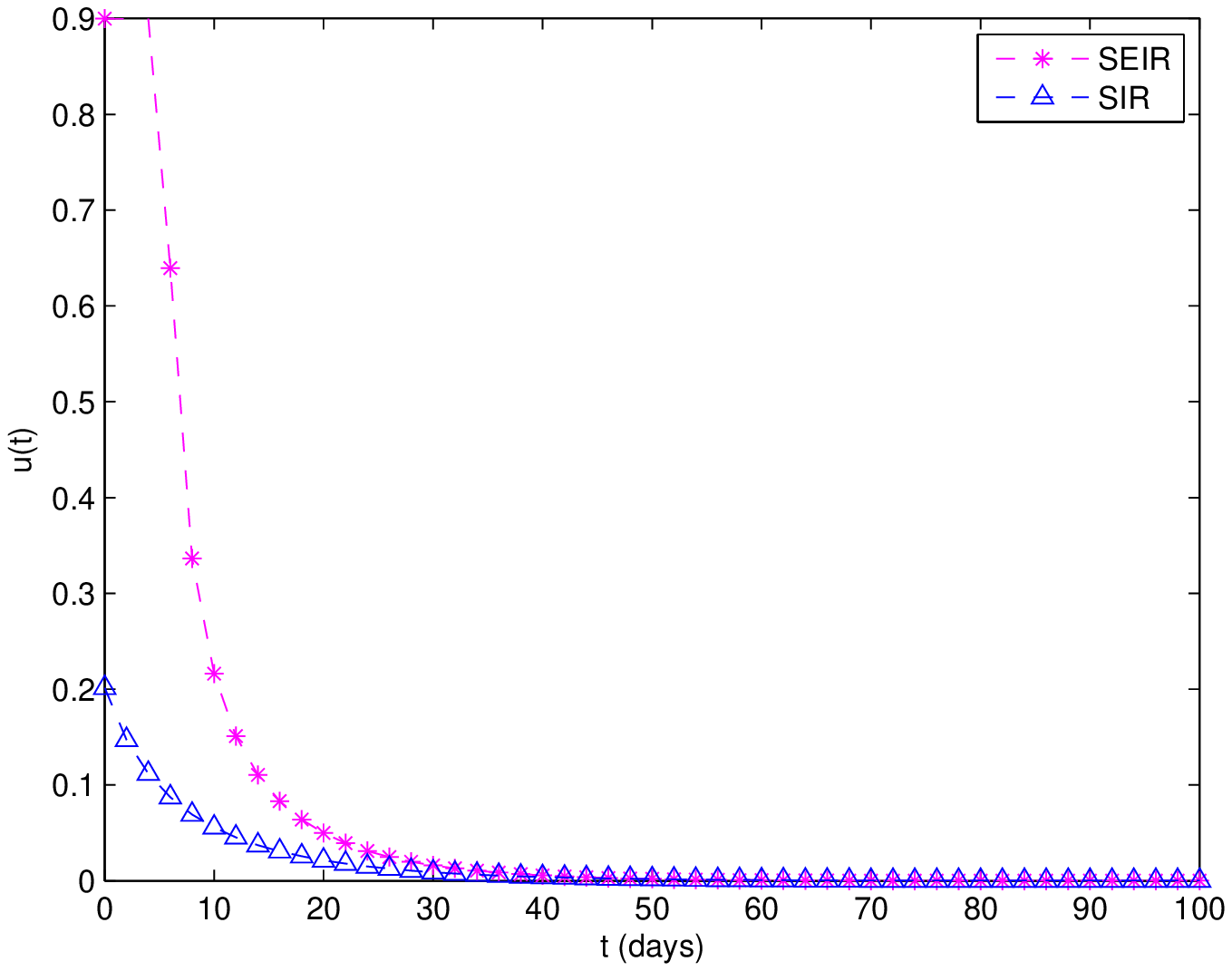}\label{cntrl_SEIR_seir}}
\caption{Comparison of control of Ebola virus by the SIR model
(described by system \eqref{SIR_control} and the cost functional
\eqref{cost_func_strat_sir}) versus control by the SEIR model
(described by system \eqref{SEIR_control} and the cost functional
\eqref{cost_func_strat_seir}).}
\end{figure}
Figure~\ref{cntrl_SEIR_seir} presents the control variable. One can see that
the control is more important in the SEIR model than in the SIR case.
More precisely, Figure~\ref{cntrl_SEIR_seir} shows that the optimal control function
of the SEIR model starts at the upper bound 0.9, while the optimal control
of the SIR model starts at 0.2. This fact is explained by the severity
of the spread of the virus, which requires an immediate implementation against
the virus, and the superiority of the SEIR system in modelling Ebola:
the exposed compartment of the SEIR model shows the severity of the virus,
where individuals can be infected in the beginning, without symptoms,
and then transmission blows up during the latent period. Thus, the rate
of vaccination is more crucial in case of the control strategy
based on the SEIR model, which is closer to the reality
of the propagation of the Ebola virus.


\section{Conclusion}
\label{Sec:conc}

We presented a comparison study between the SIR and SEIR models used in the
description of the propagation of the Ebola virus. The models were compared
by using their numerical simulation and also by studying optimal control
strategies for the control of the virus. Our investigations are based on the
parameters previously identified by Rachah and Torres in their study
of the Ebola virus \cite{MyID:321,symcomp}.


\ack{This research was supported by the
Institut de Math\'{e}matiques de Toulouse (Rachah);
and by the Portuguese Foundation for Science and Technology (FCT),
within R\&D unit CIDMA, project UID/MAT/04106/2013 (Torres).
The authors would like to thank two reviewers for their comments
and suggestions.}




\begin{thebibliography}{99}

\bibitem{barraya}
Barry~M, Traor\'{e}~F A, Sako~F B, Kpamy~D O,
Bah~E I,  Poncin~M, Keita~S, Cisse~M, Tour\'{e}~A. 2014.
Ebola outbreak in Conakry, Guinea: Epidemiological, clinical, and outcome features.
\emph{M\'{e}decine et Maladies Infectieuses} 44, no.~11--12, 491--494.

\bibitem{joseph}
Lewnard~J A, Ndeffo Mbah~M L, Alfaro-Murillo~J A,
Altice~F L,  Bawo~L,  Nyenswah~T G,  Galvani~A P. 2014.
Dynamics and control of Ebola virus transmission in Montserrado, Liberia:
A mathematical modelling analysis.
\emph{The Lancet Infectious Diseases} 14, no.~12, 1189--1195.

\bibitem{legrand}
Legrand~J, Grais~R F,  Boelle~P Y,  Valleron~A J,  Flahault~A. 2007.
Understanding the dynamics of Ebola epidemics.
\emph{Epidemiol. Infect.} 135, no.~4, 610--621.

\bibitem{anon1}
WHO, Report of an International Study Team. 1978.
Ebola haemorrhagic fever in Sudan 1976.
\emph{Bull. World Health Organ.} 56, no.~2, 247--270.

\bibitem{peter}
Peters~C J, LeDuc~J W. 1999.
An introduction to Ebola: The virus and the disease.
\emph{J. Infect Dis.} 179, Suppl.~1, ix--xvi.

\bibitem{tara1}
Smith~C T, Hilary Babcock M. D. 2010.
Ebola and Marburg virus (deadly diseases and epidemics).
Chelsea House Publisher.

\bibitem{alton}
Alton~J. 2014.
The Ebola survival handbook.
Skyhorse Publishing New York.

\bibitem{borio}
Borio~L  et al. [Working Group on Civilian Biodefense; Corporate Author]. 2002.
Hemorrhagic fever viruses as biological weapons: medical and public health management.
\emph{Journal of the American Medical Association} 287, no.~18, 2391--2405.

\bibitem{dowel}
Dowell~S F,  Mukunu~R,  Ksiazek~T G, Khan~A S, Rollin~P E, Peters~C J. 1999.
Transmission of Ebola hemorrhagic fever: A study of risk factors in family members,
Kikwit, Democratic Republic of the Congo, 1995.
Commission de Lutte contre les Epid\'{e}mies \`{a} Kikwit.
\emph{J. Infect. Dis.} 179, Suppl.~1, S87--S91.

\bibitem{edward}
Chapnick~E K. 2015.
Ebola Myths $\&$ Facts.
Wiley $\&$ Sons Publisher.

\bibitem{tara2}
Smith~C T. 2006.
Ebola. Deadly Diseases and Epidemics.
Chelsea House Publisher.

\bibitem{anon2}
Report of an International Commission. 1978.
Ebola haemorrhagic fever in Zaire, 1976.
\emph{Bull. World Health Organ.} 56, no.~2, 271--293.

\bibitem{okwar}
Uganda Ministry of Health. 2002.
An outbreak of Ebola in Uganda.
\emph{Trop. Med. Int. Health.} 7, no.~12, 1068--1075.

\bibitem{diekman}
Diekmann~O, Heesterbeek~H, Britton~T. 2013.
Mathematical tools for understanding infectious disease dynamics.
\emph{Princeton Series: In theoretical and Computational Biolgy},
ISBN 978-0-691-15539-5.

\bibitem{delf}
Rodrigues~H S, Monteiro~M T T,  Torres~D F M. 2014.
Vaccination models and optimal control strategies to dengue.
\emph{Math. Biosci.} 247, no.~1, 1--12.
{\tt arXiv:1310.4387}

\bibitem{MR2719552}
Rodrigues~H S, Monteiro~M T T, Torres~D F M. 2010.
Dynamics of dengue epidemics when using optimal control.
\emph{Math. Comput. Modelling} 52, no.~9-10, 1667--1673.
{\tt arXiv:1006.4392}

\bibitem{longini}
Longini~Jr I M,  Ackerman~E. 1978.
An optimization model for influenza A epidemics.
\emph{Mathematical Biosciences} 38, no.~1-2, 141--157.

\bibitem{kretz}
Kretzschmar~M. 2004.
Ring vaccination and smallpox control.
\emph{Emerging Infectious Diseases}, 10, no.~5, 832--841.

\bibitem{Kermack:McKendrick:1}
Kermack W, McKendrick A. 1991.
Contributions to the mathematical theory of epidemics--I. 1927.
\emph{Bulletin of Mathematical Biology} 53, no.~1-2, 33--55.

\bibitem{Kermack:McKendrick:2}
Kermack W, McKendrick A. 1991.
Contributions to the mathematical theory of epidemics--II.
The problem of endemicity. 1932.
\emph{Bulletin of Mathematical Biology} 53, no.~1-2, 57--87.

\bibitem{Kermack:McKendrick:3}
Kermack W, McKendrick A. 1991.
Contributions to the mathematical theory of epidemics--III.
Further studies of the problem of endemicity. 1933.
\emph{Bulletin of Mathematical Biology} 53, no.~1-2, 89--118.

\bibitem{herbert}
Hethcote~W H. 2000.
The Mathematics of Infectious Diseases.
SIAM Review. Vol. 42, No. 4, 599--653.

\bibitem{brauer}
Brauer~F, Driessche~P V D, Wu~J. 2008.
Mathematical Epidemiology.
\emph{Lectures Notes in Mathematics} 1945,
Mathematical Biosciences Subseries.

\bibitem{MyID:321}
Rachah~A, Torres~D F M. 2015.
Mathematical modelling, simulation and optimal control
of the 2014 Ebola outbreak in West Africa.
\emph{Discrete Dyn. Nat. Soc.}, Art. ID 842792, 9~pp.
{\tt arXiv:1503.07396}

\bibitem{symcomp}
Rachah~A, Torres~D F M. 2015.
Modelling and numerical simulation of the recent outbreak of Ebola,
\emph{2nd International Conference on Numerical and Symbolic Computation:
Developments and Applications (SYMCOMP 2015)}, pp.~179--190.
ISBN 978-989-96264-7-8.

\bibitem{gaff}
Gaff~H, Schaefer~E. 2009.
Optimal control applied to vaccination and treatment strategies for various epidemiological models.
\emph{Math Biosci Eng.}, 6, no.~3,  469--492.

\bibitem{vacc_opt2}
Banks~H T, Castillo-Chavez~C. 1987.
\emph{Bioterrorism}. SIAM Frontiers In Applied Mathematics.

\bibitem{vacc_opt4}
James-Berry~C M. 2013.
Vaccine Control of Avian Influenza H5N1 in Poultry: Need for a Positive Marker.
\emph{Journal Vaccines $\&$ Vaccin.}, 4:168. doi: 10.4172/2157-7560.1000168

\bibitem{vacc_opt3}
Gao~S, Teng~Z, Nieto~Juan J, Torres~A. 2007.
Analysis of an SIR Epidemic Model with Pulse Vaccination and Distributed Time Delay.
\emph{Journal of Biomedicine and Biotechnology}, Article ID 64870.

\bibitem{vacc_opt5}
Jung~E, Iwami~S, Takeuchi~Y, Jo~T C. 2009.
Optimal control strategy for prevention of avian influenza pandemic.
\emph{Journal of Theoretical Biology}, 260, 220--229.

\bibitem{vacc_opt1}
Tchuenche~J M, Khamis~S A, Agusto~F B, Mpeshe~S C. 2011.
Optimal control and sensitivity analysis of an influenza model
with treatment and vaccination.
\emph{Acta Biotheoretica}, 59, no.~1, 1--28.

\bibitem{Area}
Area~I, Batarfi~H, Losada~J, Nieto~J J, Shammakh~W, Torres~A. 2015
On a fractional order Ebola epidemic model.
\emph{Adv. Difference Equ.} 2015, 278.

\bibitem{bloch}
Bloch~E D. 2011.
The Real Numbers and Real Analysis.
\emph{Springer, New York}. Dordrecht, Heidelberg, London.

\end{thebibliography}
\end{document}